    \newcommand{\href}[2]{#2}
\newtheorem{theorem}{Theorem}[section]
\newtheorem{corollary}[theorem]{Corollary}
\numberwithin{equation}{section}  
  \newcounter{mnote}
  \let\oldmarginpar\marginpar
    \renewcommand\marginpar[1]{\-\oldmarginpar[\raggedleft\footnotesize #1]%
    {\raggedright\footnotesize #1}}
\definecolor{myblue}{rgb}{0.2,0.2,0.7}
\definecolor{mygreen}{rgb}{0,0.6,0}
\definecolor{mycyan}{rgb}{0,0.6,0.6}
\definecolor{myred}{rgb}{0.9,0.2,0.2}
\definecolor{mymagenta}{rgb}{0.9,0.2,0.9}
\definecolor{mywhite}{rgb}{1.0,1.0,1.0}
\definecolor{myblack}{rgb}{0.0,0.0,0.0}
\newcommand{\beq}{\begin{equation}}
\newcommand{\eeq}{\end{equation}}
\newcommand{\beqa}{\begin{eqnarray}}
\newcommand{\eeqa}{\end{eqnarray}}
\newcommand{\IF}{I\!\!F}         
\newcommand{\geqs}{\geqslant}      
\newcommand{\FETK}{{\sc FETK}}
\begin{document}

\title[Barrier Methods for Critical Exponent Problems]
      {Barrier Methods for
       Critical Exponent Problems in 
       Geometric Analysis and Mathematical Physics}

\author[J. Erway]{Jennifer Erway}
\email{erwayjb@wfu.edu}
\address{Department of Mathematics\\
         Wake Forest University \\
         Winston-Salem, NC 27109}

\author[M. Holst]{Michael Holst}
\email{mholst@math.ucsd.edu}

\address{Department of Mathematics\\
         University of California San Diego\\ 
         La Jolla CA 92093}

\thanks{MH was supported in part by NSF Awards~0715146 and 0915220,
and by DOD/DTRA Award HDTRA-09-1-0036.}
\thanks{JE was supported in part by a Ralph E. Powe Junior Faculty Enhancement Award.}

\date{\today}

\keywords{Nonlinear elliptic equations, geometric analysis,
Yamabe problem, general relativity, Einstein constraints,
conformal methods, approximation theory, finite element methods,
nonconvex optimization, barrier methods}

\begin{abstract}
We consider the design and analysis of numerical methods
for approximating positive solutions to nonlinear geometric elliptic 
partial differential equations containing critical exponents. 
This class of problems includes the Yamabe problem and the Einstein 
constraint equations, which simultaneously contain several challenging 
features: high spatial dimension $n \geqs 3$, varying (potentially 
non-smooth) coefficients, critical (even super-critical) nonlinearity, 
non-monotone nonlinearity (arising from a non-convex energy), and spatial
domains that are typically Riemannian manifolds rather than simply
open sets in $R^n$.
These problems may exhibit multiple solutions, 
although only positive solutions typically have meaning.
This creates additional complexities in both the theory and numerical 
treatment of such problems, as this feature introduces both non-uniqueness
as well as the need to incorporate an inequality constraint into the
formulation.
In this work, we consider numerical methods based on Galerkin-type 
discretization, covering any standard bases construction (finite element, 
spectral, or wavelet), and the combination of a barrier method for
nonconvex optimization and global inexact Newton-type methods for dealing
with nonconvexity and the presence of inequality constraints.
We first give an overview of barrier methods in non-convex optimization,
and then develop and analyze both a primal barrier energy method 
for this class of problems.
We then consider a sequence of numerical experiments using this type
of barrier method, based on a particular Galerkin method, namely the 
piecewise linear finite element method, leverage the FETK modeling package.
We illustrate the behavior of the primal barrier energy method for
several examples, including the Yamabe problem and the Hamiltonian constraint.
\end{abstract}

\newcommand{\Astrut}{\rule[-.5ex]{0pt}{2ex}}%
\newcommand{\struta}{\rule[-1.25ex]{0pt}{2ex}}%
\newcommand{\strutb}{\rule[-.9ex]{0pt}{2ex}}%
\newcommand{\mystrut}{\vrule height9.5pt depth1.5pt width0pt}

\newcommand{\agap}{\hspace{1.5ex}}
\newcommand{\AlgSec}[1]{{\bf[#1.]}\vrule height10pt depth1.5pt width0pt}

\newcommand{\tab}{\par\noindent\mystrut}
\newcommand{\tabb}{\tab\hskip1.5em}
\newcommand{\tabbb}{\tab\hskip3.5em}
\newcommand{\tabbbb}{\tab\hskip5.5em}
\newcommand{\tabbbbb}{\tab\hskip7.5em}
\newcommand{\tabbbbbb}{\tab\hskip9.5em}

\newcommand{\alg}[1]{\par\noindent\mystrut\ignorespaces\hbox to\textwidth{#1\hfill}}
\newcommand{\algt}[1]{\par\noindent\mystrut\hbox to\textwidth{\ignorespaces\hskip1.5em#1\hfill}}
\newcommand{\algtt}[1]{\par\noindent\mystrut\hbox to\textwidth{\ignorespaces\hskip3.5em#1\hfill}}
\newcommand{\algttt}[1]{\par\noindent\mystrut\hbox to\textwidth{\ignorespaces\hskip5.5em#1\hfill}}
\newcommand{\algtttt}[1]{\par\noindent\mystrut\hbox to\textwidth{\ignorespaces\hskip7.5em#1\hfill}}
\newcommand{\algttttt}[1]{\par\noindent\mystrut\hbox to\textwidth{\ignorespaces\hskip9.5em#1\hfill}}
\newcommand{\clg}[2]{\par\noindent\mystrut\hbox to\textwidth{\ignorespaces#1\hfill[#2]}}
\newcommand{\clgt}[2]{\par\noindent\mystrut\hbox to\textwidth{\ignorespaces\hskip1.5em#1\hfill[#2]}}
\newcommand{\clgtt}[2]{\par\noindent\mystrut\hbox to\textwidth{\ignorespaces\hskip3.5em#1\hfill[#2]}}
\newcommand{\clgttt}[2]{\par\noindent\mystrut\hbox to\textwidth{\ignorespaces\hskip5.5em#1\hfill[#2]}}
\newcommand{\clgtttt}[2]{\par\noindent\mystrut\hbox to\textwidth{\ignorespaces\hskip7.5em#1\hfill[#2]}}
\newcommand{\clgttttt}[2]{\par\noindent\mystrut\hbox to\textwidth{\ignorespaces\hskip9.5em#1\hfill[#2]}}

\newcommand{\tac}{\par\noindent\mystrut\hskip1em}
\newcommand{\tabc}{\tab\hskip2.5em}
\newcommand{\tabcc}{\tab\hskip4.5em}
\newcommand{\tabccc}{\tab\hskip6.5em}
\newcommand{\tabcccc}{\tab\hskip8.5em}
\newcommand{\tabccccc}{\tab\hskip10.5em}

\newcommand{\REPEAT}{\textbf{repeat}\Astrut}
\newcommand{\WHILE}{\textbf{while}\hskip2pt}
\newcommand{\FOR}{\textbf{for}\hskip2pt}
\renewcommand{\IF}{\textbf{if}\hskip2pt}

\newcommand{\logical}[1]{\{\mbox{#1}\}}
\newcommand{\proc}[1]{\setbox4=\hbox{\noindent\strut#1}\boxit{\box4}}
\newcommand{\procc}[1]{\setbox4=\vbox{\hsize 20pc \noindent\strut#1}\boxit{\box4}}
\newcommand{\DO}{\hskip2pt\textbf{do}\hskip2pt\Astrut}
\newcommand{\UNTIL}{\textbf{until}\hskip2pt}
\newcommand{\BEGIN}{\textbf{begin}}
\newcommand{\END}{\textbf{end}}
\newcommand{\ENDIF}{\textbf{end if}}
\newcommand{\ENDWHILE}{\textbf{end while}}
\newcommand{\ENDREPEAT}{\textbf{end repeat}}
\newcommand{\ENDDO}{\textbf{end do}}
\newcommand{\ENDFOR}{\textbf{end (for)}}
\newcommand{\THEN}{\hskip2pt\textbf{then}\hskip2pt}
\newcommand{\ELSE}{\hskip2pt\textbf{else}\hskip2pt}
\newcommand{\FIRSTELSE}{\textbf{else}\hskip2pt}
\newcommand{\ELSEIF}{\textbf{else if}}
\newcommand{\RETURN}{\textbf{return}}
\newcommand{\FIRST}{\hskip-2pt}  

\newcommand{\STOP}{\mathbf{stop}}
\newcommand{\BREAK}{\mathbf{break}}
\newcommand{\TRUE}{\mathbf{true}}
\newcommand{\FALSE}{\mathbf{false}}
\newcommand{\true}{\mathbf{true}}
\newcommand{\false}{\mathbf{false}}
\newcommand{\NOT}{\mathop{\mathbf{not}\,}}
\newcommand{\AND}{\mathop{\;\mathbf{and}\;}}
\newcommand{\OR}{\mathop{\;\mathbf{or}\;}}


\newcommand{\badSteps}{\mathit{badSteps}}
\newcommand{\converged}{\mathit{converged}}
\newcommand{\complete}{\hbox{\it complete}}
\newcommand{\CSpoint}{\hbox{\it subspace\_stationary\_pt}}
\newcommand{\exchange}{\mbox{\it exchange}}
\newcommand{\exit}{\hbox{\it exit}}
\newcommand{\found}{\hbox{\it found}}
\newcommand{\feasible}{\mathit{feasible}}
\newcommand{\hitcon}{\mbox{\it hit\_constraint}}
\newcommand{\addcon}{\mbox{\it add\_constraint}}
\newcommand{\Hsingular}{\mbox{\it singular\_H}}
\newcommand{\imprvd}{\hbox{\it improved}}
\newcommand{\indef}{\hbox{\it indefinite}}
\newcommand{\minimizer}{\mbox{\it subspace_minimizer}}
\newcommand{\optimal}{\mathit{optimal}}
\newcommand{\posdef}{\hbox{\it positive\_definite}}
\newcommand{\unbndd}{\hbox{\it unbounded}}
\newcommand{\semidef}{\hbox{\it positive\_semidefinite}}
\newcommand{\stationary}{\hbox{\it stationary\_point}}
\newcommand{\subspace}{\hbox{\it subspace\_convergence}}
\newcommand{\singular}{\mathit{singular}}
\newcommand{\tol}{\mathit{tol}}
\newcommand{\uncon}{\mbox{\it unconstrained\_step}}
\newcommand{\vertex}{\mbox{\it vertex}}
\newcommand{\wschosen}{\hbox{\it working\_set\_chosen}}
\newcommand{\Wsingular}{\mbox{\it singular\_W}}
\newcommand{\Asingular}{\mbox{\it singular\_A}}


\maketitle


\vspace*{-0.95cm}
{\footnotesize
\tableofcontents
}
\vspace*{-0.5cm}

\newcommand{\tWhite}[1]{{\color{white}#1}}
\newcommand{\defined}{\mathop{\,{\scriptstyle\stackrel{\triangle}{=}}}\,}
\newcommand{\words}[1]{\mgap\text{#1}\mgap}
\newcommand{\mgap}{\;\;}
\newcommand{\subject}{\hbox{\rm subject to}}
\newcommand{\minimize}[1]{{\displaystyle\minim_{#1}}}
\newcommand{\minim}{\mathop{\mathrm{minimize}}}
\newcommand{\diag}{\mathop{\mathrm{diag}}}
\newcommand{\strict}{\mathop{\mathrm{strict}}}
\newcommand{\interior}{\mathop{\mathrm{int}}}
\newcommand{\st}{\mathop{\mathrm{ : }}}
\newcommand{\BigO}[1]{O\big(#1\big)}
\newcommand{\strt}{\rule[-.5ex]{0pt}{3ex}}
\newcommand{\hstrt}{\rule[-1ex]{0pt}{3.5ex}}
\newcommand{\CG}{{\small CG}}

\section{Introduction}
\label{sec:intro}

In this article we consider the design and analysis of numerical methods
for approximating positive solutions to nonlinear geometric elliptic 
partial differential equations containing critical exponents. 
These types of problems arise regularly in geometric analysis and 
mathematical physics, examples of which include the Yamabe problem
and the Einstein constraint equations~\cite{HNT07a,HNT07b}.
These problems often simultaneously contain several challenging features, 
including spatial dimension $n \geqs 3$, varying and potentially non-smooth
coefficients, critical (or even super-critical) nonlinearity, 
non-monotone nonlinearity (arising from a nonconvex energy), and spatial
domains that are typically Riemannian manifolds rather than simply
open sets in $R^n$.
For these types of problems, there may be multiple solutions, although
only positive solutions typically have mathematical and physical meaning.
This creates additional complexities in both the theory and numerical 
treatment of such problems, as this feature introduces both non-uniqueness
as well as the need to incorporate an inequality constraint into the
formulation.
In this work, we consider numerical methods based on Galerkin-type 
discretization, covering any standard bases construction (finite element, 
spectral, or wavelet), and the combination of a barrier method for
nonconvex optimization and global inexact Newton-type methods for dealing
with nonconvexity and the presence of inequality constraints.
Our goal is to develop reliable methods for computing positive approximate
solutions to these types of nonlinear problems.

Critical exponent problems arise in a fundamental way throughout
geometric analysis and mathematical general relativity.
One of the seminal problems in this area is the {\em Yamabe Problem}:
Find $u \in X$ such that
\begin{align}\label{eqn-yamabe}
-8\Delta_g u + R u & = R_u u^5 \quad \mbox{ in } \Omega,
\\
u & > 0, 
\end{align}
where $\Omega$ is a Riemannian $3$-manifold,
$g$ is the positive definite metric on $\Omega$,
$\Delta_g$ is the Laplace-Beltrami operator generated by $g$,
$R$ is the scalar curvature of $g$,
and $R_u$ is the scalar curvature corresponding to the
{\em conformally transformed} metric:
\begin{equation}
\overline{g} = \phi^4 g.
\end{equation}
The coefficients $R$ and $R_u$ can take any sign.  The Banach space
$X$ containing the solution is an appropriate Sobolev class
$W^{s,p}(\Omega)$ for suitably chosen exponents $s$ and $p$.  If the
manifold $\Omega$ has a boundary, then boundary conditions are also
prescribed, such as $u=1$ on an exterior boundary to $\Omega$.  In the
case that $\Omega \subset \mathbb{R}^3$, and $g_{ij}=\delta_{ij}$,
then $\Delta_g$ reduces to just the Laplace operator on $\Omega$.
With the presence of the term $u^5$ and the spatial dimension
being three, this is an example of a {\em critical exponent problem};
such problems are known to be difficult to analyze as well as to
simulate numerically.  The presence of the inequality constraint (only
positive solutions have mathematical and physical meaning) creates
additional complexities in both the theory and numerical treatment of
such problems.  Prior work on numerical methods for critical exponent
semilinear problems has focused primarily on the development of
adaptive methods for recovering solution blowup;
cf.~\cite{Budd.C;Humphries.A1998,Budd.C1998}.

{\em Outline of the paper.}
The structure of the remainder of the paper is as follows.
In \S\ref{sec:einstein}, we give a more detailed overview of the
class of geometric PDE problems of interest, including both the
Yamabe problem and the Hamiltonian constraint in the Einstein equations.
As part of the discussion, we derive the linearized Hamiltonian constraint,
and construct an artificial nonconvex ``energy'' functional, which gives 
rise to the Hamiltonian constraint as a condition for its stationarity.
In \S\ref{sec:methods},
we give an overview of barrier methods in nonconvex optimization.
In \S\ref{sec:primal} we then develop and analyze a primal barrier 
energy method for this class of 
problems.
Finally, in \S\ref{sec:num}
we consider a sequence of numerical experiments using this type
of barrier method, based on a particular Galerkin method, namely the 
piecewise linear finite element method, leverage the FETK modeling package.
We illustrate the behavior of the primal barrier energy method for
several examples, including the Yamabe problem and the Hamiltonian constraint.
We draw some conclusions in \S\ref{sec:conc}.

\section{Elliptic Problems in Geometric Analysis and Relativity}
\label{sec:einstein}

While one of our motivations here is to develop methods for the Yamabe
problem and similar problems arising in geometric analysis, we are also
interested in a related, more general problem arising in general relativity.
The Einstein equations, which represents Einstein's 1915 theory of gravity,
are a coupled hyperbolic-elliptic system that governs the deformation of
the underlying metric of spacetime in response to the distribution and
dynamics of matter and energy density.
The elliptic part of the system, known as the 
{\em Einstein constraint equations},
or the coupled {\em Hamiltonian and momentum constraints},
are of great interest in both the mathematical and numerical
relativity research communities.
This elliptic system must be satisfied by initial data used to
evolve the metric forward in time with the hyperbolic portion of
the Einstein equations (called the evolution equations), and the
constraints must also be satisfied at all points in time during
the evolution.

The Einstein constraints have all of the difficult features of the
Yamabe problem, plus more: three spatial dimensions, non-flat
Riemannian manifold spatial domain, critical exponent, non-monotone
nonlinearity, negative exponent powers (non-polynomial rational nonlinearity,
giving rise to singularities at the origin), possibly non-smooth 
coefficients, possible non-uniqueness,
physical positivity requirement, and the structure
of an elliptic system for two variables, with lack of a variational
structure (the equations do not arise as the Euler condition for
stationarity of an underlying energy functional).
However, one of the difficulties is not present in an important physical
situation known as the {\em constant mean curvature (CMC) case}:
in this situation,
the coupled elliptic system of the Hamiltonian and momentum constraints
{\em decouple} into the separate constraints, both of which have
separate variational structure.
However, all of the other difficulties remain, and the Hamiltonian
constraint alone may be viewed as a generalization of the Yamabe problem.
An overview of the Einstein constraints, including the CMC case, can
be found in~\cite{HNT07a,HNT07b}.
Here we will consider only the CMC case, and focus on the
Hamiltonian constraint, also known as the {\em Lichnerovich equation}.
\begin{eqnarray} 
-\nabla\cdot\left(\bar{a}\nabla u\right) + \frac{R}{8}u & = & -\frac{\tau ^2}{12}u^5+
\frac{\sigma ^2}{8} u^{-7} + 2\pi\rho u^{-3} \quad \mbox{in $\Omega$},\label{eqn-strong1} \\
\left(\bar{a}\nabla u(x,y,z)\right) \cdot n + cu& = & g_N
\hspace{1cm}\mbox{on $\partial_N\Omega$}, \label{eqn-strong2}\\
u & = & g_D \hspace{1cm} \mbox{on $\partial_D \Omega$},  \label{eqn-strong3}
\end{eqnarray}
where $n$ is the unit normal and $\partial \Omega=\partial_D \Omega \bigcup \partial_N \Omega$ and 
$\partial_D \Omega \bigcap \partial_N \Omega = \emptyset$.  Here $R(x)$, $\tau^2
(x)$, $\sigma ^2(x):\Omega\subset \Re^3\rightarrow\Re$ and $\rho$,
$\sigma^2$, $\tau^2
\geq 0$ for all $x\in\Omega$.  Also it assumed that there exist positive 
constants $C_1$ and $C_2$ such that $-C_1\leq R(x) \leq C_2$.  
Also, $\tau$ may be considered to be constant so that CMC decoupling occurs.
Reasonable values for $\sigma$ are such that $0\geq \sigma ^2\leq C_3$
where $C_3$ may be as large as $10^6$.  
The strong form of the constraints given in
(\ref{eqn-strong1}--\ref{eqn-strong3}) can be transformed into the weak
form, reformulating the problem using fewer derivatives.  

\subsection{Weak Formulation} \label{chap:weak} The weak formulation is
obtained by taking the $L^2$-inner product over $\Omega$ with all test
functions $v\in H_0^1(\Omega)\defined\left\{u\in H^1(\Omega) \, : \,  v=0 \words{on} \Omega_D \right\}$ and (\ref{eqn-strong1}), yielding:
\begin{eqnarray*}
\int_{\Omega}\left(-\bar{a}\triangle u + \frac{R}{8}u + \frac{\tau ^2}{12}u^5-
\frac{\sigma ^2}{8} u^{-7} - 2\pi\rho u^{-3}\right)v\,\,\, dx = 0. 
\end{eqnarray*}
Green's first identity states:
\begin{eqnarray*}
\int_{\Omega}(\nabla\cdot z)v \,\,\, dx = \int_{\partial\Omega}(n\cdot z)
v\,\,\, ds - \int_{\Omega}z\cdot\nabla v \,\,\, dx.
\end{eqnarray*}
Taking $z :=\bar{a}\nabla u $ and recalling that $v=0$ on $\partial_D(\Omega)$,
we obtain:
\begin{equation}\label{eqn-weak-progress}
 \int_{\Omega}\left(\bar{a}\nabla u\right) \cdot\nabla v\,\,\, dx - \int_{\partial_N\Omega}(n\cdot \bar{a}
\nabla u )v\,\,\, ds  + \int_{\Omega} k(u)v \,\,\,dx = 0,
\end{equation}
where $k(u)\defined \frac{R}{8}u + \frac{\tau ^2}{12}u^5-\frac{\sigma
  ^2}{8} u^{-7} - 2\pi\rho u^{-3}$.  Using (\ref{eqn-strong2}) in (\ref{eqn-weak-progress}), yields:
\begin{eqnarray*}
 \int_{\Omega}\left(\bar{a}\nabla u\right)\cdot\nabla v + k(u)v\,\,\, dx -
 \int_{\partial_N\Omega}\left( g_N -cu\right) v\,\,\, ds = 0.
\end{eqnarray*}
Thus, the weak form of 
(\ref{eqn-strong1}-\ref{eqn-strong3})
is given by:
\begin{equation}\label{eqn-weak1}
\text{Find}~ u\in X = H_{D}^1(\Omega ) \cap [u_-,u_+] ~\text{s.t.}~
(f(u),v)=0, ~\text{for all}~ v\in H_0^1(\Omega),
\end{equation}
where
 $H_{D}^1(\Omega )\defined\left\{ u\in H^1(\Omega) \, : \,
  u=g_D \words{on} \Omega_D\right\}$ and 
\begin{equation}\label{eqn-weak2}
(f(u),v)\defined\int_{\Omega}\left[ \left(\bar{a}\nabla u\right)\cdot\nabla v
+k(u)v \right] \,\,\, dx
 +\int_{\partial_N\Omega}(c u - g_N) v.
\end{equation}
Note that we have constructed the space 
$X = H_{D}^1(\Omega ) \cap [u_-,u_+]$ 
in which to look for solutions to the problem based on the need to ``guard''
the nonlinearity from blow-up at the origin.
This potential blowup is due to the negative powers appearing as part of 
the non-polynomial, rational form of the nonlinearity.
The pointwise interval $[u_-,u_+]$, which can be strictly negative
or strictly positive, can be shown to contain one or more solutions
using maximum principles and fixed-point arguments; cf.~\cite{HNT07a,HNT07b}.
The numerical methods we develop later in the paper will incorporate
this type of ``guarding'' in the discrete formulation.


The Gateaux derivative $(Df(u)w,v)$ of the weak nonlinear form
(\ref{eqn-weak2}) is needed for use in Newton-like algorithms.  It is
computed formally as
\begin{eqnarray}
\frac{\partial}{\partial t}f\left(u+t w\right)(v)
\bigg\vert_{t=0} &= &
\frac{\partial}{\partial t}
\left[\int_{\Omega}
\left[ \left( \bar{a}\nabla (u+t w)\right)\cdot \nabla v + k(u+t w)
    v\right]  \,\,\, dx \right] \bigg\vert_{t=0} \nonumber
\\ & & 
+\frac{\partial}{\partial t}\left[\int_{\partial_N\Omega} 
 \left( c \left(u+t w\right) - g_N \right)v    \,\,\, ds
\right]
 \bigg\vert_{t=0}, \nonumber \\
\end{eqnarray}
giving that
\begin{eqnarray}
(Df(u)w,v) =
\int_{\Omega}
\left[ \left( \bar{a}\nabla w\right)\cdot \nabla v + k'(u)wv\right]  \,\,\, dx
+
\int_{\partial_N\Omega} 
  cwv \,\,\, ds,  \label{eqn-second-deriv}
\end{eqnarray}
where $k'(u)wv=\frac{R}{8}wv+\frac{5}{12}\tau ^2u^4wv+
\frac{7}{8}\sigma ^2 u^{-8}wv + 6\pi\rho u^{-4}wv$.

\subsection{The energy} \label{chap:energy} The weak formulation can be
viewed as a zero-finding problem; alternatively, it may be viewed as the
problem of finding a critical point of an energy functional.

\begin{theorem}\label{thrm-cp}
  $u$ is a solution to the weak form (\ref{eqn-weak1}) if and only if $u$
  is a critical point of the energy functional
\begin{eqnarray}\label{eqn-energy}
  J(u)=\int_{\Omega}\left[\frac{1}{2}\left( \bar{a}\nabla u\right)\cdot\nabla u 
    + \frac{1}{16}Ru^2+\frac{1}{72} \tau^2 u^6
    +\frac{1}{48}\sigma^2u^{-6}+\pi\rho u^{-2}\right] \,\,\, dx \nonumber \\
  +\frac{1}{2}\int_{\partial_N\Omega} cu^2 \,\,\, ds - \int_{\partial_N\Omega} g_N
  u \,\,\, ds.
\end{eqnarray}
\end{theorem}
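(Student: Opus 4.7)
The plan is to verify the equivalence by computing the Gateaux derivative of $J$ and matching it term-by-term with the weak form $(f(u),v)$. Since $X = H_D^1(\Omega)\cap[u_-,u_+]$ is a closed convex set in $H^1(\Omega)$, and for $u$ in its interior (with $u_- > 0$ strictly, away from the singularities of $u^{-6}$ and $u^{-2}$) the admissible variations form the full tangent space, a critical point means $\frac{d}{dt}J(u+tv)\big|_{t=0}=0$ for all $v\in H_0^1(\Omega)$. The claim is then precisely that this directional derivative equals $(f(u),v)$ for every such $v$, from which the equivalence follows immediately.

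First I would compute $\frac{d}{dt}J(u+tv)\big|_{t=0}$ term by term. Differentiating the quadratic gradient piece $\tfrac12(\bar a\nabla u)\cdot\nabla u$ yields $(\bar a\nabla u)\cdot\nabla v$; the mass term $\tfrac{1}{16}Ru^2$ yields $\tfrac18 Ruv$; the critical power $\tfrac{1}{72}\tau^2 u^6$ yields $\tfrac{1}{12}\tau^2 u^5 v$; the singular terms give $\tfrac{1}{48}\sigma^2 u^{-6}\mapsto -\tfrac18\sigma^2 u^{-7}v$ and $\pi\rho u^{-2}\mapsto -2\pi\rho u^{-3}v$; the boundary terms contribute $cuv$ and $-g_N v$ on $\partial_N\Omega$. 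Collecting, the result is exactly
\begin{equation*}
\int_\Omega\bigl[(\bar a\nabla u)\cdot\nabla v + k(u)v\bigr]\,dx + \int_{\partial_N\Omega}(cu-g_N)v\,ds = (f(u),v),
\end{equation*}
with $k(u)$ as in (\ref{eqn-weak-progress}). This is formally the same computation carried out for $Df(u)w$ in (\ref{eqn-second-deriv}), only now applied to the energy rather than the weak form.

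The main obstacle — and the only non-routine part — is justifying the interchange of $\frac{d}{dt}$ with the integral for the singular integrands $u^{-6}$ and $u^{-2}$. For generic $u\in H^1(\Omega)$ these expressions are not even integrable, which is why the set $X$ restricts $u$ to $[u_-,u_+]$ with $u_->0$. I would argue as follows: since $u\geqs u_->0$ pointwise and $v\in H_0^1(\Omega)$ is fixed, for $|t|$ sufficiently small one has $u+tv\geqs u_-/2$ on $\Omega$, so the integrands and their $t$-derivatives are uniformly bounded by fixed integrable majorants (constants times $|v|$ or $v^2$, which lie in $L^1(\Omega)$ by Sobolev embedding in dimension three). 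Dominated convergence then legitimizes the differentiation under the integral for every singular term, and the standard Lebesgue theory handles the polynomial and quadratic terms without issue. The boundary integrals are similarly handled via the trace theorem.

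Once the identity $J'(u)v=(f(u),v)$ is established for all $v\in H_0^1(\Omega)$, both implications of the theorem are immediate: a solution of (\ref{eqn-weak1}) satisfies $(f(u),v)=0$ for all such $v$ and is therefore a critical point of $J$; conversely, a critical point of $J$ (in the interior of the constraint set) yields $(f(u),v)=0$ for all $v\in H_0^1(\Omega)$, which is (\ref{eqn-weak1}).
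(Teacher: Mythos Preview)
Your proposal is correct and follows essentially the same approach as the paper: compute the Gateaux derivative of $J$ term by term and identify it with $(f(u),v)$. The paper's proof is in fact purely formal and omits the dominated-convergence justification you supply; your added care about the singular terms goes beyond what the paper does (though note that your pointwise bound $u+tv\geqs u_-/2$ tacitly assumes $v\in L^\infty$, which $H^1$ in three dimensions does not guarantee---this would need a density argument or a restriction to bounded test functions, but the paper does not address this either).
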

\begin{proof}
We prove this theorem by showing that Gateaux derivative of $J(u)$ is exactly the weak formulation (\ref{eqn-weak1}).
We have 
\begin{eqnarray}
  \frac{\partial}{\partial t}J\left(u+t v\right)
  \bigg\vert_{t=0} &=& 
  \frac{\partial}{\partial t}\left[
    \int_{\Omega}
    \left[\frac{1}{2}\left( \bar{a}\nabla (u+tv)\right)\cdot\nabla (u+tv) 
      + \frac{1}{16}R(u+tv)^2 \right.\right.\nonumber \\
  & & \left. \left. + \frac{1}{72} \tau^2 (u+tv)^6 
      +\frac{1}{48}\sigma^2(u+tv)^{-6}+\pi\rho (u+tv)^{-2} \right] \,\,\, dx \right] \bigg\vert_{t=0} \nonumber \\ 
  & & 
  +\frac{\partial}{\partial t}\left[\int_{\partial_N\Omega} 
    \left[\frac{1}{2} c \left(u+t v\right)^2 - g_N (u+tv)\right]   \,\,\, ds \right]
  \bigg\vert_{t=0} \nonumber \\
  &=& 
  \int_{\Omega}\left[ \left(\bar{a}\nabla u\right)\cdot\nabla v
    +\frac{R}{8}uv + \frac{\tau ^2}{12}u^5v-\frac{\sigma
      ^2}{8} u^{-7}v - 2\pi\rho u^{-3}v
  \right] \,\,\, dx \nonumber \\
& &
  +\int_{\partial_N\Omega}(c u - g_N) v \,\,\, ds \nonumber \\
  &=&
  (f(u),v).\nonumber
\end{eqnarray}
\end{proof}
Theorem~\ref{thrm-cp} can be restated in terms of stationarity.
\begin{corollary}
  The energy functional $J(u)$ in (\ref{eqn-energy}) is stationary at $u$
  if and only if $u$ is a solution to the weak formulation
  (\ref{eqn-weak1}).
\end{corollary}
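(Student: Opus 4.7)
The plan is to recognize that this corollary is a direct reformulation of Theorem~\ref{thrm-cp}, obtained simply by identifying ``stationarity'' of $J$ with the vanishing of its first variation. So the proof I would write is essentially a single observation together with an invocation of the preceding theorem.

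First, I would recall the definition: $u \in X = H_D^1(\Omega) \cap [u_-,u_+]$ is a stationary point of $J$ precisely when its Gateaux derivative vanishes in every admissible direction, i.e.
\[
\left.\frac{\partial}{\partial t}J(u+tv)\right|_{t=0} = 0 \quad \text{for all } v \in H_0^1(\Omega).
\]
Here the test space is $H_0^1(\Omega)$ because admissible variations must preserve the Dirichlet boundary data $g_D$ on $\partial_D \Omega$, and for sufficiently small $t$ the perturbation $u + tv$ remains inside the pointwise interval $[u_-,u_+]$ provided $u$ lies in its interior, which is the relevant setting.

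Second, I would invoke Theorem~\ref{thrm-cp}, whose proof already established the explicit identity
\[
\left.\frac{\partial}{\partial t}J(u+tv)\right|_{t=0} = (f(u),v), \quad \forall v \in H_0^1(\Omega),
\]
with $(f(u),v)$ as in (\ref{eqn-weak2}). Combining the two displays, $u$ is stationary for $J$ if and only if $(f(u),v)=0$ for every $v \in H_0^1(\Omega)$, which is exactly the weak formulation (\ref{eqn-weak1}).

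There is essentially no obstacle here, since all the analytic work (differentiating the $u^6$, $u^{-6}$, and $u^{-2}$ terms and applying Green's identity on the boundary integral) has already been carried out in the proof of Theorem~\ref{thrm-cp}. The only point worth flagging is the feasibility issue created by the box constraint $u \in [u_-,u_+]$: because negative powers in $k(u)$ make the energy singular at $u=0$, one needs $u$ to sit in the interior of $[u_-,u_+]$ for variations to be admissible in both directions. This is precisely the motivation for the barrier-method formulation developed in the subsequent sections, and it is why an unconstrained stationarity characterization is adequate here.
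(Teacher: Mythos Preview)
Your proposal is correct and matches the paper's approach: the paper provides no separate proof for this corollary at all, presenting it simply as a restatement of Theorem~\ref{thrm-cp} in the language of stationarity. Your write-up makes explicit the one-line observation (stationarity $\Leftrightarrow$ vanishing Gateaux derivative $\Leftrightarrow$ $(f(u),v)=0$ for all $v$) that the paper leaves implicit.
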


From an optimization standpoint, it is reasonable to ask under what
conditions can $J(u)$ be minimized to obtain the solution to the weak
formulation.  Certainly, if $J(u)$ is a convex energy functional then a
weak solution can be found by minimizing $J(u)$.  Convexity implies
uniqueness of solutions, which in this application is often not the case.
Nevertheless, Newton's method may be used to find a critical point of the
energy, requiring the need for a second derivative.  From
Theorem~\ref{thrm-cp}, the second Gateaux derivative of the energy
functional $J(u)$ is exactly $D(f(u),v)$ (see (\ref{eqn-second-deriv})).

\textbf{Example 1}.  Set $\bar{a}=1.0$, $R=1.0$, $\tau=0.1$, $\sigma=0.2$,
$\rho=0.1$, $c=1.0$, and $g_N=-1.0$ with $\Omega$ chosen to be a single
hole domain at the origin in three-dimensional space with only Robin
boundary conditions.  In this simple case, since $R\ge 0$ and $c\ge 0$, the
energy functional $J(u)$ is convex on the domain of positive functions
$u>0$, and thus, minimizing the energy functional $J$ over $u>0$ is
equivalent to solving the weak formulation.  Similarly, if $R\ge 0$ and $c\ge
0$, then $J(u)$ is also convex over $u<0$--allowing for the existence of
both strictly negative and positive solutions.  This yields a convex energy
functional $J$ with one positive solution $u>0$ and one negative solution
$u<0$ (see Section~\ref{sec:num}).

\textbf{Example 2}. Set $a=2.0$, $\tau=\sqrt{72}$, $\sigma=\sqrt{48}$, and
$\rho=1/\pi$, and let $\Omega$ be the one-dimensional closed subset
$[0.1,10]$.  With these choices, the energy reduces to:
\begin{equation} \label{eqn-energy-1d}
J(u)=\int_{0.1}^{10}\left[\left( \nabla u\right)\cdot\nabla u +\frac{1}{16}Ru^2+u^6+u^{-6}+ u^{-2}\right] \,\,\, dx .
\end{equation}
Let $I(u)$ denote the integrand in (\ref{eqn-energy-1d}).  For $u\defined
x$, the second derivative of $I(u)$ is given by
$$
\frac{d^2 I(u)}{dx^2}=\frac{1}{8}R+30u^4+ 42u^{-8}+6u^{-4}.
$$
Thus, if $R<0$ and sufficiently negative, $I$ has an inflection point;
otherwise, $I$ is a strictly positive, convex function of $u$.  For
any value of $R$, $I(u)$ is an even function since $I(u)=I(-u)$.  The
integrand $I(u)$ with $R=-1000$ is plotted in Figure \ref{fig-1} in
coordinate pairs $(x,I(u))$ with $u\defined x$  on the interval $[0.4,3]$.  The second
derivative of $I(u)$ confirms this is a nonconvex function of $u$.
\begin{figure}[t]
\begin{center}
\includegraphics[height=2in]{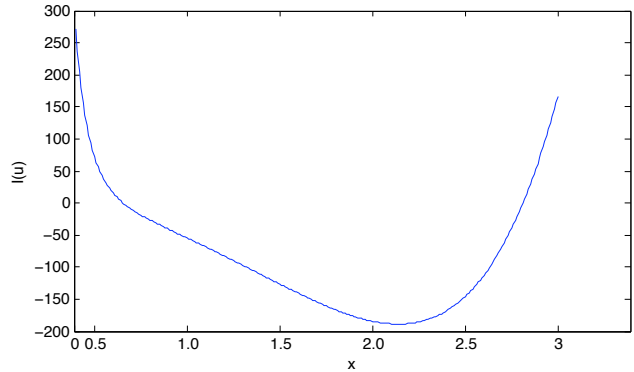}
\end{center}
\caption{A plot of the integrand $I(u)$ with $u\defined x$, $\bar{a}=2$, $R=-1000$,
  $\tau=\sqrt{72}$, $\sigma=\sqrt{48}$, $\rho=1/\pi$, and $\Omega=[0.4,3]$.}
\label{fig-1}
\end{figure}

When the energy functional is convex, a weak solution can be found by
minimizing the energy functional using a Newton-like iteration.  However,
when the energy is nonconvex, a weak solution may be a maximizer or
saddlepoint of the energy.  In this case, Newton's method can be used to
find a stationary point of the energy functional with progress towards a
stationary point being guaranteed by computing steps along the Newton
direction that yield sufficient decrease in a merit function.  The
positivity constraint $u>0$ can naturally be enforced when solving the
Lichnerovich equation (\ref{eqn-strong1}--\ref{eqn-strong3}) using a
safe-guarded Newton method.  Unlike the Lichnerovich equation, the Yamabe
Problem (\ref{eqn-yamabe}) does not have a singularity at $u=0$; thus, a
barrier method can be used to help enforce the positive inequality
constraint on $u$.  In the following section, we consider barrier methods
for nonconvex optimization and develop a primal barrier energy method
for this class of problems.

\section{Barrier methods for nonconvex optimization}
\label{sec:methods}

Barrier methods are the most widely-used type of interior method for
general nonlinear inequality-constrained optimization problems of the form:
\begin{equation}                        \label{eqn-inequality}
 \begin{array}{lc}
   \minimize{x\in\Re^n}& f(x)\\
   \subject            & c(x)\ge 0,
 \end{array}
\end{equation}
where $f:\Re^n\rightarrow\Re$ is assumed to be twice-continuously
differentiable and $c:\Re^n\rightarrow\Re^m$ is
an $m$-vector of constraints. Generally speaking, barrier methods seek to
minimize a composite function that both resembles the original function and
naturally prevents infeasible iterates.  Today, the most widely used 
barrier function is the classical logarithmic barrier function:
\begin{equation}\label{eqn-frisch}
B_\mu(x)=f(x)-\mu \sum_{i=1}^m \ln c_i(x).
\end{equation}
Notice that if $\mu$ is small, the barrier function resembles the original
function; moreover, this function inherits the smoothness associated with
the original problem but is only defined in the strict interior of the
feasible region for the original problem (\ref{eqn-inequality}).

The classical barrier method solves (\ref{eqn-inequality}) by minimizing 
$B_\mu(x)$ for a decreasing sequence of positive $\mu$.  Given a fixed $\mu>0$,
first-order optimality conditions for $x^*$ to be a minimizer of the
barrier function is that $\nabla B_\mu(x^*)=0$, i.e.,
\begin{equation}\label{eqn-nablaB}
\nabla f(x^*)- J(x^*)^T y(x^*)=0,
\end{equation}
where $J$ denotes the constraint Jacobian and $y_i(x)\defined \mu / c_i(x)$.
Alternatively, $y$ is interpreted as a vector of Lagrange multipliers
associated with the original inequality problem (\ref{eqn-inequality}).
Moreover, $c_i(x) y_i =\mu$ can be viewed a perturbation of the
complementarity condition for a first-order KKT point.  

When $c(x)\defined x$, the Newton equations for minimizing $B_\mu (x)$ are
given by
\begin{equation} \label{eqn-barrierNewton}
\left(\nabla^2 f(x)+\mu \diag(X^{-2})\right) p = -\left[ \nabla f(x)-\mu X^{-1} \right],
\end{equation}
where $X^{-j}\defined [1/x_1^j, 1/x_2^j, \ldots, 1/x_n^j]^T$ for $j\in\{1,2\}$,
and $\diag(x)$ is the diagonal matrix whose $i$th diagonal entry is $x_i$.  To
ensure global convergence, a line search must be used to satisfy a
sufficient decrease criteria (e.g., the Armijo or Wolfe conditions).  Once
a minimizer $x(\mu)$ of $B_\mu(x)$ is computed, $\mu$ is reduced and the
process is repeated.  Notice that the subsequent minimization can be
warm-started by choosing the minimizer of the previous barrier function as the initial guess.
Algorithm~\ref{alg-barrier} summarizes the classical barrier method for
solving (\ref{eqn-inequality}) with $c(x)\defined x$ and using the Amijo condition
for sufficient descent.
\begin{figure}[htb]
\begin{center}
\begin{minipage}{\textwidth}
\begin{Pseudocode}{Algorithm~\ref{alg-barrier}: Classical Barrier Method}\label{alg-barrier}
\tab Choose $x_0>0$, $\mu>0$, $\eta\in \left(0,\frac{1}{2}\right)$, and $\gamma\in (0,1)$;
\tab Set $k=0$;
\tab \WHILE{} $\NOT \converged$ \DO{}
\tab Compute $x(\mu)$, an unconstrained minimizer of $B_\mu(x)$:
\tabb \WHILE{} $\NOT \converged$ \DO{}
\tabbb Solve (\ref{eqn-barrierNewton}) to obtain $p$;
\tabbb Use the ``99\% rule'' to compute $\alpha_{\max}$;
\tabbb Compute $\alpha\in (0,\alpha_{\max}]$ such that $B_\mu(x+\alpha p) \leq B_\mu(x) +
\eta\alpha\nabla B_\mu(x)^Tp$;
\tabbb $x\leftarrow x+\alpha p$;
\tabb \ENDDO{}
\tabb $x_{k+1}\leftarrow x(\mu)$;
\tabb $\mu\leftarrow \gamma \mu$;
\tabb $k\leftarrow k+1$;
\tab \ENDDO{} 
\end{Pseudocode}
\end{minipage}
\end{center}
\end{figure}

When computing a step length, it is necessary to safeguard the step to
avoid taking a step into the infeasible region.  For linear constraints, the
so-called ``99\%'' rule may be invoked (e.g., see ~\cite{ForGW02}) that states that if
$\Delta x_i <0$ for at least one $i$: (i) Compute $\alpha_{max} = \max\{  
-(x_i/ \Delta x_i)  \,\, : \,\, \Delta x_i < 0 \}$; (ii) Set the
  maximum step length to be $\bar{\alpha} = \max\{ 0.99\times \alpha_{max}, 1
  \}$.  

We now state two theorems that summarize the convergence of the
classical barrier method.  The first theorem governs the local
convergence of a sequence of minimizers of the classical barrier
function.  Before stating this theorem, we state the following
theorem, which will be used in the proof of the theorem on local convergence.
\begin{theorem}\label{thrm-N}
  Consider problem (\ref{eqn-inequality}), where $f:\Re^n\rightarrow
  \Re$ and $c:\Re^n\rightarrow \Re^m$ are continuous.  Let
  $\mathcal{N}$ denote the set of all local constrained minimizers
  with objective function value $f^*$, and assume that $f^*$ has been
  chosen so that $\mathcal{N}$ is nonempty.  Assume further that the
  set $\mathcal{N}^*\subseteq \mathcal{N}$ is a nonempty compact
  isolated subset of $\mathcal{N}$.  Then there exists a compact set
  $\mathcal{S}$ such that $\mathcal{N}^*$ lies in
  $\interior(\mathcal{S})\cap\mathcal{F}$ and for any feasible point
  $\bar{x}$ in $\mathcal{S}$ but not in $\mathcal{N}^*$, $f(\bar{x}) >
  f^*$.  Furthermore, every point $x^*$ in $\mathcal{N}^*$ has the
  property that $f(x^*)=f^*=\min f(x)$ for all $x\in\mathcal{S}\cap\mathcal{F}$.
\end{theorem}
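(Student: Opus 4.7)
The plan is to build $\mathcal{S}$ as a thin closed tubular neighborhood of $\mathcal{N}^*$ designed to exploit two features of the hypotheses simultaneously: (i) each point of $\mathcal{N}^*$ is a local constrained minimizer with value $f^*$, so the inequality $f\geqs f^*$ holds on some open ball about it intersected with $\mathcal{F}$; and (ii) $\mathcal{N}^*$ is an isolated subset of $\mathcal{N}$, so some neighborhood of $\mathcal{N}^*$ meets $\mathcal{N}$ only in $\mathcal{N}^*$. The strategy is to use (i) to enforce $f\geqs f^*$ throughout $\mathcal{S}\cap\mathcal{F}$ and (ii) to upgrade that inequality to be strict off $\mathcal{N}^*$.

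First, for every $x^*\in\mathcal{N}^*$ I pick $r(x^*)>0$ such that $f(y)\geqs f^*$ for all $y\in B(x^*,r(x^*))\cap\mathcal{F}$; this is possible because $x^*$ is a local constrained minimizer with value $f^*$. Compactness of $\mathcal{N}^*$ yields a finite subcover $\{B(x^*_i,r_i/2)\}_{i=1}^k$ of $\mathcal{N}^*$ by the half-radius balls, and I set $r_{\min}=\min_i r_i/2>0$ and $U_1=\{x\in\Re^n:\dist(x,\mathcal{N}^*)<r_{\min}\}$. The isolation hypothesis supplies an open $U_2\supseteq\mathcal{N}^*$ with $U_2\cap(\mathcal{N}\setminus\mathcal{N}^*)=\emptyset$. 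I then choose $\eps>0$ small enough that $\mathcal{S}:=\{x\in\Re^n:\dist(x,\mathcal{N}^*)\leqs\eps\}\subseteq U_1\cap U_2$. Since $\mathcal{N}^*$ is compact in $\Re^n$, the closed $\eps$-tube $\mathcal{S}$ is compact and satisfies $\mathcal{N}^*\subseteq\interior(\mathcal{S})$; as $\mathcal{N}^*\subseteq\mathcal{F}$ by definition, we obtain $\mathcal{N}^*\subseteq\interior(\mathcal{S})\cap\mathcal{F}$.

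The main step is to show that every feasible $\bar{x}\in\mathcal{S}\setminus\mathcal{N}^*$ satisfies $f(\bar{x})>f^*$. Because $\bar{x}\in U_1$, there is an index $i$ with $\bar{x}\in B(x^*_i,r_i/2)$, and then $\bar{x}\in B(x^*_i,r_i)\cap\mathcal{F}$ gives $f(\bar{x})\geqs f^*$. Suppose for contradiction that $f(\bar{x})=f^*$. The ball-doubling inclusion $B(\bar{x},r_i/2)\subseteq B(x^*_i,r_i)$ then implies $f(y)\geqs f^*=f(\bar{x})$ for every $y\in B(\bar{x},r_i/2)\cap\mathcal{F}$, which makes $\bar{x}$ itself a local constrained minimizer with value $f^*$, i.e.\ $\bar{x}\in\mathcal{N}$. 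Combined with $\bar{x}\in U_2$, this forces $\bar{x}\in U_2\cap\mathcal{N}=\mathcal{N}^*$, contradicting $\bar{x}\notin\mathcal{N}^*$. Hence $f(\bar{x})>f^*$, and the concluding statement $f(x^*)=f^*=\min_{x\in\mathcal{S}\cap\mathcal{F}}f(x)$ for each $x^*\in\mathcal{N}^*$ follows immediately, since points of $\mathcal{N}^*$ already attain the value $f^*$ while every other feasible point of $\mathcal{S}$ is strictly larger.

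The hard part is the upgrade performed in the previous paragraph: going from ``$\bar{x}$ lies near a local minimizer whose $f$-value it happens to share'' to ``$\bar{x}$ is itself a local minimizer.'' This is exactly what the ball-doubling trick accomplishes, and it is the reason the covering radii must be halved when forming $r_{\min}$; a naive choice of neighborhood without the halving would yield only the nonstrict inequality $f(y)\geqs f^*$ on a ball of vanishing radius around $\bar{x}$ and would not let us conclude $\bar{x}\in\mathcal{N}$, so the isolation hypothesis could not be brought to bear. Once this upgrade is in place, the remainder of the argument is routine bookkeeping with continuity of $f$ and $c$, compactness of $\mathcal{N}^*$, and the standing assumptions on $\mathcal{N}^*$.
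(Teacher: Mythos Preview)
The paper does not actually prove this theorem: its entire proof reads ``See \cite[Theorem 7]{FiaM90} and \cite[Theorem 6]{Wri92}.'' Your argument therefore supplies a self-contained proof where the paper merely cites the literature, and your overall strategy---build a closed $\eps$-tube around $\mathcal{N}^*$ small enough to sit inside both a union of local-minimizer balls and an isolating neighborhood, then use a ball-inclusion argument to show that any feasible point in the tube with value $f^*$ would itself belong to $\mathcal{N}$, hence to $\mathcal{N}^*$---is correct and is essentially the classical Fiacco--McCormick argument.

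There is one small slip in the write-up. You define $U_1=\{x:\dist(x,\mathcal{N}^*)<r_{\min}\}$ with $r_{\min}=\min_i r_i/2$, and then assert that $\bar{x}\in U_1$ implies $\bar{x}\in B(x^*_i,r_i/2)$ for some $i$. That implication is not what your definitions give: $U_1$ is the $r_{\min}$-tube around $\mathcal{N}^*$, not the union of the half-balls. What the triangle inequality actually yields is that there exist $z\in\mathcal{N}^*$ and $i$ with $|\bar{x}-z|<r_{\min}\leqs r_i/2$ and $|z-x^*_i|<r_i/2$, hence $|\bar{x}-x^*_i|<r_i$, i.e.\ $\bar{x}\in B(x^*_i,r_i)$ (the full ball, not the half-ball). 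This is all you need: since the containment is strict, the open ball $B(\bar{x},\rho)$ with $\rho=r_i-|\bar{x}-x^*_i|>0$ sits inside $B(x^*_i,r_i)$, and your ``$\bar{x}$ is itself a local minimizer'' conclusion follows exactly as before. Alternatively, you could simply redefine $U_1=\bigcup_i B(x^*_i,r_i/2)$, which is open and covers $\mathcal{N}^*$, and then the argument goes through verbatim as written. Either fix is cosmetic; the idea is sound.
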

\begin{proof}
See {\cite[Theorem 7]{FiaM90}} and {\cite[Theorem 6]{Wri92}}.
\end{proof}

The following theorem and proof regarding convergence of the minimizers of the
barrier function is found in \cite{ForGW02}.

\begin{theorem}\label{thrm-local-primal}
  Consider problem (\ref{eqn-inequality}), where $f:\Re^n\rightarrow
  \Re$ and $c:\Re^n\rightarrow \Re^m$ are continuous.  Let
  $\mathcal{F}$ denote the feasible region, let $\mathcal{N}$ denote
  the set of minimizers with objective function value $f^*$, and
  assume that $\mathcal{N}$ is nonempty.  Let $\{\mu_k\}$ be a
  strictly decreasing sequence of positive barrier parameters such
  that $\lim_{k\to\infty} \mu_k = 0$.  Assume that
\begin{enumerate}
 \item[\rm(a)] there exists a nonempty compact set $\mathcal{N}^*$ of local
      minimizers that is an isolated subset of $\mathcal{N}$;

 \item[\rm(b)] at least one point in $\mathcal{N}^*$ is in the closure of
      $\strict(\mathcal{F})$.
\end{enumerate}
Then the following results hold:
\begin{enumerate}
\item[\rm (i)] there exists a compact set $\mathcal{S}$ such that
$\mathcal{N}^*\subset\interior(\mathcal{S})\cap\mathcal{F}$ and such that, for any feasible point
$\bar{x}$ in $\mathcal{S}$ but not in $\mathcal{N}^*$, $f(\bar{x}) > f^*$;

\item[\rm (ii)] for all sufficiently small $\mu_k$, there is an
\emph{unconstrained} minimizer $y_k$ of the barrier function $B_{\mu_k}(x)$ in
$\strict(\mathcal{F})\cap\interior(\mathcal{S})$, with
\[
   B_{\mu_k}(y_k) = \hbox{\rm min}\,
   \{\, B_{\mu_k}(x) \st  x\in \strict(\mathcal{F})\cap \mathcal{S}\, \}.
\]
Thus $B_{\mu_k}(y_k)$ is the smallest value of $B_{\mu_k}(x)$ for
any $x\in \strict(\mathcal{F})\cap \mathcal{S}$;

\item[\rm (iii)] any sequence of these unconstrained minimizers
$\{y_k\}$ of $B_{\mu_k}(x)$ has at least one convergent subsequence;

\item[\rm (iv)] the limit point $x_\infty$ of any convergent subsequence
$\{x_k\}$ of the unconstrained minimizers $\{y_k\}$ defined in
{\rm (ii)} lies in $\mathcal{N}^*$;

\item[\rm (v)] for the convergent subsequences $\{x_k\}$ of
part {\rm (iv)},
\[
  \lim_{k\to\infty}f(x_k) =f^* = \lim_{k\to\infty} B_{\mu_k}(x_k).
\]
\end{enumerate}
\end{theorem}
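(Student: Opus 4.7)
The plan is to exploit assumption (b), which gives us a point where the barrier term is finite and near $\mathcal{N}^*$, together with Theorem~\ref{thrm-N}, which will supply the compact set $\mathcal{S}$ separating $\mathcal{N}^*$ from less favorable feasible points. Part (i) is essentially a direct invocation: I would set $f^*$ equal to the common objective value on $\mathcal{N}^*$ and apply Theorem~\ref{thrm-N} verbatim to obtain the compact set $\mathcal{S}$ with $\mathcal{N}^*\subset \interior(\mathcal{S})\cap\mathcal{F}$ and $f(\bar{x})>f^*$ on $(\mathcal{S}\cap\mathcal{F})\setminus\mathcal{N}^*$. By continuity of $f$ on the compact set $(\mathcal{S}\cap\mathcal{F})\setminus(\text{open nbhd of }\mathcal{N}^*)$, I would extract a uniform gap $\delta>0$ such that $f(\bar{x})\geqs f^*+\delta$ there; this uniform separation is what makes part (ii) work.

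For part (ii), the strategy is to compare $B_{\mu_k}$ at a nearby strictly feasible test point against its value on the boundary of $\strict(\mathcal{F})\cap\mathcal{S}$. Using assumption (b), pick $x^*\in\mathcal{N}^*\cap\overline{\strict(\mathcal{F})}$ and, for any $\epsilon>0$, a point $\hat{x}\in\strict(\mathcal{F})\cap\interior(\mathcal{S})$ with $f(\hat{x})\leqs f^*+\epsilon$ and $\min_i c_i(\hat{x})$ bounded away from $0$; then $B_{\mu_k}(\hat{x})=f(\hat{x})-\mu_k\sum_i \ln c_i(\hat{x})\to f(\hat{x})\leqs f^*+\epsilon$ as $\mu_k\to 0$. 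On the boundary of $\strict(\mathcal{F})\cap\mathcal{S}$ one has two cases: either some $c_i\to 0$, in which case $-\mu_k\ln c_i\to +\infty$ dominates, or the boundary point lies in $\partial\mathcal{S}\cap\mathcal{F}$, where $f\geqs f^*+\delta$ by the uniform gap. For $\mu_k$ sufficiently small these lower bounds exceed $f^*+\epsilon+1$, so $B_{\mu_k}$ attains its infimum over the compact set $\strict(\mathcal{F})\cap\mathcal{S}$ at an interior point $y_k$, which is then automatically an unconstrained local minimizer of $B_{\mu_k}$. I expect this tension between the interior test value and the boundary/barrier blow-up to be the main technical obstacle: one must carefully choose $\epsilon$ and then the threshold for $\mu_k$ so that both boundary alternatives are ruled out simultaneously.

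Parts (iii)--(v) then unwind from compactness. Part (iii) is immediate since $\{y_k\}\subset\mathcal{S}$, which is compact, so at least one convergent subsequence exists; denote its limit $x_\infty$. For (iv) I would argue $x_\infty\in\mathcal{N}^*$ by showing $f(x_\infty)=f^*$ and $x_\infty\in\mathcal{S}\cap\mathcal{F}$, after which Theorem~\ref{thrm-N} forces membership in $\mathcal{N}^*$. The inequality $B_{\mu_k}(y_k)\leqs B_{\mu_k}(\hat{x})$, taking $\mu_k\to 0$ along the subsequence and using that $-\mu_k\sum_i\ln c_i(y_k)$ is bounded below (because each $c_i(y_k)$ is bounded above on $\mathcal{S}$, so each $\ln c_i(y_k)$ is bounded above), yields $\limsup_k f(y_k)\leqs f^*+\epsilon$; sending $\epsilon\to 0$ gives $f(x_\infty)\leqs f^*$, while $f(x_\infty)\geqs f^*$ follows from feasibility and part (i). Thus $f(x_\infty)=f^*$, which together with the uniform gap property forces $x_\infty\in\mathcal{N}^*$. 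Part (v) drops out of the same chain of inequalities: one has $f(x_k)\leqs B_{\mu_k}(x_k)\leqs B_{\mu_k}(\hat{x})\to f(\hat{x})$, so in the limit $\lim_k f(x_k)=f^*=\lim_k B_{\mu_k}(x_k)$, once one verifies that the barrier term $-\mu_k\sum_i\ln c_i(x_k)$ tends to zero along the convergent subsequence, which uses the uniform bounds on $c_i$ over $\mathcal{S}$ and the fact that $\mu_k\to 0$ dominates any logarithmic growth coming from $c_i(x_k)\to c_i(x_\infty)\geqs 0$. The one subtle point is handling the case where $x_\infty$ lies on the boundary of $\strict(\mathcal{F})$ (so some $c_i(x_\infty)=0$); there I would use $B_{\mu_k}(x_k)\geqs f(x_k)$ for the lower bound and the comparison with $\hat{x}$ for the upper bound to still conclude.
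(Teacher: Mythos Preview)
The paper does not actually prove this theorem; its entire proof reads ``See~\cite{ForGW02}.'' Your sketch, by contrast, supplies the standard Fiacco--McCormick style argument that underlies the cited result, and it is essentially correct. The overall architecture---invoke Theorem~\ref{thrm-N} for the separating compact set $\mathcal{S}$, use assumption~(b) to produce a strictly feasible test point $\hat{x}$ with near-optimal objective, compare barrier values at $\hat{x}$ against the two boundary alternatives, then extract limits by compactness---is exactly the classical route.

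Two small slips are worth tightening. First, you refer to ``the compact set $\strict(\mathcal{F})\cap\mathcal{S}$,'' but $\strict(\mathcal{F})$ is open, so this intersection is not compact; the clean statement is to minimize the lower-semicontinuous extension of $B_{\mu_k}$ (set to $+\infty$ where some $c_i=0$) over the compact set $\overline{\strict(\mathcal{F})}\cap\mathcal{S}$ and then argue the minimum is attained in the interior. Second, the inequality $B_{\mu_k}(x_k)\geqs f(x_k)$ you invoke near the end is not true in general (it fails whenever some $c_i(x_k)>1$); what you actually need, and what your earlier observation about $c_i$ being bounded above on $\mathcal{S}$ already gives, is $B_{\mu_k}(x_k)\geqs f(x_k)-\mu_k m C$ for a fixed constant $C$, which suffices since $\mu_k\to 0$. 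With these two adjustments your argument goes through and matches the content of the reference the paper cites.
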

\begin{proof}
See~\cite{ForGW02}.
\end{proof}

Applied to problem (\ref{eqn-inequality}), the classical barrier
method can be viewed as a path-following method that defines a path to
both an optional $x^*$ and associated Lagrange multipliers $\lambda^*$.

The following theorem (stated and proved in~\cite{ForGW02} and based
on results from \cite{FiaM90, Wri92, SWriO99}) summarizes the
conditions under which a sequence of barrier minimizers converges to
the solution of (\ref{eqn-inequality}). 

\begin{theorem}\label{thrm-barrier}
Consider problem (\ref{eqn-inequality}).  Assume that the set
of strictly feasible points is nonempty.
Let $x^*$ be a local constrained minimizer of (\ref{eqn-inequality}),
$\nabla f(x)\defined g(x)$,
$J(x)\defined \nabla c(x)^T$, and $\mathcal{A}$
denote the set of indices of the active constraints at $x^*$.  Assume that the
following sufficient optimality conditions hold at $x^*$:
\begin{itemize}
\item[(a)] $x^*$ is a KKT point, i.e., there exists a nonempty set
  $\mathcal{M_\lambda}$ of Lagrange multipliers $\lambda$ satisfying 
$$M_\lambda = \{ \lambda \, : \, g(x^*) = J(x^*)^T\lambda, \,\, 
\lambda\ge 0, \,\, \text{and}  \,\,  c_i(x^*) \lambda_i = 0 \,\,\, \forall i \};$$
\item[(b)] there exists $p$ such that $J_{\mathcal{A}}(x^*) p >0$,
  where $J_{\mathcal{A}}(x^*)$ denotes the active constraints at
  $x^*$; and 
\item[(c)] there exists $\omega >0$ such that $p^T H(x^*,\lambda) p \ge
  w\|p\|_2^2$ for all $\lambda\in \mathcal{M_\lambda}$ and all nonzero
  $p$ satisfying $g(x^*)^Tp = 0$ and $J_{\mathcal{A}} (x^*) p\ge 0$,
  where $H(x^*,\lambda)\defined \nabla^2 f(x^*)-\sum_{i=1}^m
  \lambda_i\nabla^2c_i(x^*)$ is the Hessian of the Lagrangian
  evaluated at $x=x^*$.
\end{itemize}
Assume that a logarithmic barrier method is applied in which $\mu_k$
converges monotonically to zero as $k\rightarrow \infty$.  Then
\begin{itemize}
\item[(i)] there is at least one subsequence of unconstrained
  minimizers of the barrier function $B_{\mu_k}(x)$ converging to $x^*$;
\item[(ii)] let $\{x^k\}$ denote such a convergent subsequence.  Then
  the sequence of barrier multipliers $\{\lambda_k\}$, whose 
  $i$th
  component is $\mu_k/c_i(x^k)$ is bounded;
\item[(iii)] $\lim_{k\rightarrow\infty} \lambda(x^k) = \bar{\lambda}\in
  \mathcal{M_\lambda}$.
\end{itemize}
If, in addition, strict complementarity holds at $x^*$, i.e., there is
a vector $\lambda\in \mathcal{M_\lambda}$ such that $\lambda_i>0$ for
all $i\in \mathcal{A}$, then
\begin{itemize}
\item[(iv)] $\bar{\lambda}_{\mathcal{A}} > 0;$
\item[(v)] for sufficiently large $k$, the Hessian matrix $\nabla^2
  B_{\mu_k}(x^k)$ is positive definite;
\item[(vi)] a unique, continuously differentiable vector function
  $x(\mu)$ of unconstrained minimizers of $B_{\mu}(x)$ exists for
  positive $\mu$ in a neighborhood of $\mu=0$; and 
\item[(vii)] $\lim_{\mu\rightarrow 0_+} x(\mu) = x^*$.
\end{itemize}
\end{theorem}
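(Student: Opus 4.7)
The plan is to derive each conclusion from the barrier stationarity condition $\nabla B_{\mu_k}(x^k) = 0$, which, using $\lambda_k^i \defined \mu_k / c_i(x^k)$, rearranges to
\begin{equation*}
g(x^k) = J(x^k)^T \lambda_k, \qquad c_i(x^k)\lambda_k^i = \mu_k, \qquad \lambda_k \geq 0.
\end{equation*}
I will treat (i)--(iii) as the ``primal--dual convergence'' piece that uses only hypotheses (a)--(c), and (iv)--(vii) as the ``smooth central path'' piece that is unlocked by strict complementarity through the implicit function theorem.

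For (i), I would invoke Theorem~\ref{thrm-local-primal} directly. Hypothesis (c) (the second-order sufficient condition) together with (a) makes $x^*$ an isolated local constrained minimizer, so some compact $\mathcal{N}^* \ni x^*$ is isolated in $\mathcal{N}$; the assumed nonempty strict interior of $\mathcal{F}$ places $x^*$ in the closure of $\strict(\mathcal{F})$, supplying the remaining hypothesis of Theorem~\ref{thrm-local-primal}. For (ii), boundedness of $\{\lambda_k\}$ is a standard contradiction argument: inactive components vanish since $c_i(x^*) > 0$; if $\|\lambda_k^{\mathcal{A}}\| \to \infty$, normalize and extract a subsequence limit $\bar{\mu} \geq 0$, $\|\bar{\mu}\|=1$, satisfying $J_{\mathcal{A}}(x^*)^T \bar{\mu} = 0$, which contradicts hypothesis (b) by testing against the direction $p$ with $J_{\mathcal{A}}(x^*) p > 0$. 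For (iii), pass to the limit along any convergent subsequence of the now-bounded $\{\lambda_k\}$: the stationarity identity yields $g(x^*) = J(x^*)^T \bar{\lambda}$, and the perturbed complementarity $c_i(x^k)\lambda_k^i = \mu_k \to 0$ gives $c_i(x^*)\bar{\lambda}_i = 0$, so $\bar{\lambda} \in \mathcal{M_\lambda}$.

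Under strict complementarity, I would assemble the perturbed KKT map
\begin{equation*}
F(x,\lambda,\mu) \defined \begin{pmatrix} g(x) - J(x)^T \lambda \\ C(x)\lambda - \mu e \end{pmatrix},
\end{equation*}
where $C(x) = \diag(c_i(x))$, and show that its Jacobian at $(x^*,\lambda^*,0)$ is nonsingular. This uses (b) upgraded by strict complementarity to yield a unique active multiplier, together with the reduced positive-definiteness from (c); the standard block-elimination argument on $F'$ then shows nonsingularity. The implicit function theorem produces a $C^1$ path $(x(\mu),\lambda(\mu))$ defined in a one-sided neighborhood of $\mu=0_+$ with $x(0)=x^*$, $\lambda(0)=\lambda^*$, which delivers (vi), (vii), and---by continuity of $\lambda(\cdot)$---also (iv). For (v), I would expand
\begin{equation*}
\nabla^2 B_{\mu_k}(x^k) = H(x^k,\lambda_k) + J(x^k)^T C(x^k)^{-1} \Lambda_k J(x^k),
\end{equation*}
and bound it from below on an orthogonal split: on directions in $\ker J_{\mathcal{A}}(x^*)$, hypothesis (c) gives a uniform positive lower bound on $H$; on complementary directions, strict complementarity forces $\lambda_k^i / c_i(x^k) = \mu_k / c_i(x^k)^2 \to \infty$ for $i \in \mathcal{A}$, so the rank-$|\mathcal{A}|$ correction dominates.

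The hard part will be (v): one must balance the possibly indefinite Lagrangian Hessian $H(x^k,\lambda_k)$ against the rank-deficient correction $J^T C^{-1}\Lambda J$ uniformly in $k$, controlling the near-singular quantities $1/c_i(x^k)$ for active indices with quantitative rates. This is precisely where all three hypotheses must cooperate: (b) to rule out unbounded multipliers, strict complementarity to ensure that the dangerous active correction is actually unboundedly positive (not merely nonnegative), and (c) to handle the remaining null-space directions. The other conclusions then follow by routine continuity and limit arguments once the central-path machinery is in place.
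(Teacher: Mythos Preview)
The paper does not actually prove this theorem: its proof is the single line ``See~\cite{ForGW02}.'' So there is no in-paper argument to compare against directly. That said, the paper \emph{does} prove the subsequent corollary (the special case $c(x)=x$ with $x^*>0$), and its method there is close to yours: it expands $\nabla^2 B_{\mu_k}(x^k)$ to establish (v) first, and then applies the implicit function theorem to the \emph{primal} stationarity map $\Phi(x,\mu)=\nabla B_\mu(x)$, whose Jacobian in $x$ is the barrier Hessian just shown to be positive definite. Your overall decomposition into a ``primal--dual convergence'' block for (i)--(iii) and a ``smooth central path'' block for (iv)--(vii) is the standard route and matches that methodology.

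Two corrections to your sketch. First, in (i), the reason $x^*$ lies in the closure of $\strict(\mathcal{F})$ is not the blanket nonemptiness of the strict interior but hypothesis (b): the MFCQ direction $p$ with $J_{\mathcal{A}}(x^*)p>0$ makes $x^*+tp$ strictly feasible for small $t>0$. Second, and more substantively, your route to (vi) via the implicit function theorem on the full KKT map $F(x,\lambda,\mu)$ has a gap. The block-elimination argument you describe ends with the requirement $J_{\mathcal{A}}(x^*)^T v_{\mathcal{A}}=0 \Rightarrow v_{\mathcal{A}}=0$, which is LICQ, not MFCQ. Your claim that ``(b) upgraded by strict complementarity yields a unique active multiplier'' is false in general: MFCQ gives only a bounded multiplier set, and strict complementarity merely asserts that some interior point of that set exists; the set can still be a nontrivial polytope (take three active constraints $x_1\ge 0$, $x_2\ge 0$, $x_1+x_2\ge 0$ at the origin in $\Re^2$). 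The clean fix is exactly what the paper's corollary proof does: establish (v) first by your null-space/range split, then apply the implicit function theorem to $\nabla B_\mu(x)=0$ alone, where the relevant Jacobian is $\nabla^2 B_\mu$, already positive definite. This also delivers (vi) and (vii) without any uniqueness-of-multipliers assumption.
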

\begin{proof}
See~\cite{ForGW02}.
\end{proof}

Consider the case when $c(x)\defined x$, and the constrained local
minimizer $x^*$ is strictly positive.  In this case,
Theorem~\ref{thrm-barrier} reduces to the following corollary:

\begin{corollary}
Consider problem (\ref{eqn-inequality}) with $c(x)\defined x$.
Suppose $x^*$ be a local constrained minimizer of
(\ref{eqn-inequality}).  Further, assume that $x^*>0$.  Let $\nabla f(x)$, 
$J(x)$ and $\mathcal{A}$ be defined as in Theorem \ref{thrm-barrier}.
Assume that the following sufficient optimality conditions hold at $x^*$:
\begin{itemize}
\item[(a)] $x^*$ is a stationary point of $f(x)$, i.e., $g(x^*)=0$;
\item[(b)] $\nabla^2 f(x^*)$ is positive definite.
\end{itemize}
Assume that a logarithmic barrier method is applied in which $\mu_k$
converges monotonically to zero as $k\rightarrow \infty$.  Then
\begin{itemize}
\item[(i)] there is at least one subsequence of unconstrained
  minimizers of the barrier function $B_{\mu_k}(x)$ converging to $x^*$;
\item[(ii)] let $\{x^k\}$ denote such a convergent subsequence.  Then
  the sequence of barrier multipliers $\{\lambda(x^k)\}$, whose 
  $i$th
  component is $\mu_k/c_i(x^k)$ is bounded;
\item[(iii)] $\lim_{k\rightarrow\infty} \lambda(x^k) = 0$;
\item[(iv)] for sufficiently large $k$, the Hessian matrix $\nabla^2
  B_{\mu_k}(x_k)$ is positive definite;
\item[(v)] a unique, continuously differentiable vector function
  $x(\mu)$ of unconstrained minimizers of $B_\mu(x)$ exists for
  positive $\mu$ in a neighborhood of $\mu=0$; and 
\item[(vi)] $\lim_{\mu\rightarrow 0_+} x(\mu) = x^*$.
\end{itemize}
\end{corollary}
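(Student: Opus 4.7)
The plan is to derive this corollary as a direct specialization of Theorem~\ref{thrm-barrier} to the case $c(x)\defined x$ with the additional restriction that the minimizer $x^*$ lies in the strict interior of the feasible region $\{x\geqs 0\}$. First I would check that the hypotheses (a)--(b) here imply hypotheses (a)--(c) of Theorem~\ref{thrm-barrier}, and then read off conclusions (i)--(vi) from the corresponding conclusions (i)--(vii) of that theorem, noting which ones degenerate.

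The main observation is that with $c(x)\defined x$ we have $J(x)=I$, and since $x^*>0$ the set of active constraints satisfies $\mathcal{A}=\emptyset$. This is what drives every simplification. For hypothesis (a) of Theorem~\ref{thrm-barrier}, complementarity $c_i(x^*)\lambda_i=0$ combined with $x_i^*>0$ forces $\lambda=0$, so the KKT relation $g(x^*)=J(x^*)^T\lambda$ collapses to $g(x^*)=0$, which is exactly our assumption (a); thus $\mathcal{M}_{\lambda}=\{0\}$. Hypothesis (b) of the theorem requires a vector $p$ with $J_{\mathcal{A}}(x^*)p>0$, which holds vacuously because $\mathcal{A}$ is empty. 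For hypothesis (c) the Hessian of the Lagrangian reduces to $H(x^*,\lambda)=\nabla^2 f(x^*)$ (since $\lambda=0$), and the side conditions $g(x^*)^Tp=0$ and $J_{\mathcal{A}}(x^*)p\geqs 0$ are both vacuous, so the required second-order condition reduces to $p^T\nabla^2 f(x^*)p\geqs\omega\|p\|_2^2$ for all nonzero $p$, which is exactly our assumption (b) of positive definiteness (with $\omega$ any value below the smallest eigenvalue).

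Next, I would note that strict complementarity, which is required for conclusions (iv)--(vii) of Theorem~\ref{thrm-barrier}, is vacuously satisfied when $\mathcal{A}=\emptyset$, so all of (i)--(vii) of the theorem are available. Then (i) and (ii) translate verbatim. Conclusion (iii) of the theorem asserts $\lim\lambda(x^k)=\bar\lambda\in\mathcal{M}_\lambda$; since $\mathcal{M}_\lambda=\{0\}$, this specializes to $\lim\lambda(x^k)=0$, which is conclusion (iii) here. Conclusion (iv) of the theorem ($\bar\lambda_{\mathcal{A}}>0$) is vacuous and disappears from the statement, so conclusions (v)--(vii) of the theorem become (iv)--(vi) here with no change in content.

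I do not expect a significant obstacle; the argument is essentially bookkeeping, reducing each hypothesis and conclusion in turn using $\mathcal{A}=\emptyset$. The one place where one has to be slightly careful is recognizing that when the active set is empty the strict complementarity hypothesis needed to obtain parts (iv)--(vii) of Theorem~\ref{thrm-barrier} is vacuously true, so that the full conclusion of that theorem transfers and not merely parts (i)--(iii). This gives all six conclusions of the corollary.
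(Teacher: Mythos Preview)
Your proposal is correct, but it takes a genuinely different route from the paper's own proof. You treat the corollary as a true corollary: you invoke Theorem~\ref{thrm-barrier} as a black box, verify that hypotheses (a)--(c) there reduce to the present (a)--(b) once $\mathcal{A}=\emptyset$ and $\mathcal{M}_\lambda=\{0\}$, observe that strict complementarity is vacuous, and then read off the conclusions with the appropriate renumbering. This is clean and is arguably the natural way to justify a statement labeled ``corollary.''

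The paper, by contrast, does \emph{not} appeal to Theorem~\ref{thrm-barrier}. Instead it adapts the underlying proof of that theorem (from \cite{ForGW02}) to the special setting: it first invokes Theorem~\ref{thrm-local-primal} directly to obtain the convergent subsequence in (i), then computes $\lambda_i(x^k)=\mu_k/c_i(x^k)\to 0$ explicitly from $c_i(x^*)>0$ for (ii)--(iii), writes out the barrier Hessian and shows it converges to $\nabla^2 f(x^*)$ for (iv), and finally applies the implicit function theorem by hand to get (v)--(vi). In short, the paper re-runs the machinery rather than citing the packaged theorem. Your approach is shorter and more in the spirit of a corollary; the paper's approach is more self-contained and makes the mechanism (in particular, why the barrier Hessian is eventually positive definite) visible without having to unpack the proof of Theorem~\ref{thrm-barrier}.
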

\begin{proof}
  This proof is based on the proof given in~\cite{ForGW02} for
  Theorem~\ref{thrm-barrier}, modified for the case when $c(x)=x$
  and the constrained local minimizer $x^*$ is strictly positive.
  
  Since $x^*>0$, $x^*$ is in $\strict(\mathcal{F})$, and thus, in the
  closure of $\strict(\mathcal{F})$.  Assumptions (a) and (b) imply
  that $x^*$ is an isolated \emph{unconstrained} minimizer of $f(x)$.
  Thus, the conditions of Theorem~\ref{thrm-local-primal} are met,
  implying that there is at least one subsequence of unconstrained
  minimizers of $B_{\mu_k}(x)$ converging to $x^*$.  This proves (i).

Let $\{x^k\}$ denote such a convergent
sequence, i.e., $\lim_{k\to\infty} x^k = x^*$.  Each $x^k$ is an
unconstrained minimizer of $B_{\mu_k}(x)$:
\begin{equation*}
  \nabla B_{\mu_k}(x)=g(x^k)-\sum_{i=1}^m \nabla c_i(x^k) \lambda_i(x^k),\quad\hbox{where}\quad
  \lambda_i(x^k) = \frac{\mu_k}{c_i(x^k)}.
\end{equation*}
Because $c(x^k) > 0$ (from Theorem~\ref{thrm-local-primal}, result (ii)),
$\lambda_i(x^k)$ is strictly positive for any $\mu_k > 0$.  Since
there are no active constraints at $x^*$ and 
$x^k$ converges to $x^*$,
\begin{equation*}           
   \lim_{k\to\infty} c_i(x^k) = c_i(x^*) > 0,\quad\hbox{and hence}\quad
   \lim_{k\to\infty} \lambda_i(x^k) =  0,
\end{equation*}
for all $i=1,\ldots, m$, proving (ii) and (iii).

As in the proof of (v) in Theorem~\ref{thrm-barrier},
to determine the properties of $\nabla^2
B_{\mu_k} (x^k)$ as $k\to\infty$,
we write the Hessian of the barrier function (\ref{eqn-frisch}) as
\begin{equation*}
   \nabla^2 B_{\mu_k} (x^k) = \nabla^2 f(x^k) + 
        -\sum_{i=1}^m {\lambda_i(x^k)}\; \nabla^2 c_i(x^k) + \sum_{i=1}^m \frac{\lambda_i(x^k)}{c_i(x^k)}  \nabla c_i(x^k)
            (\nabla c_i(x^k))^T.
\end{equation*}
Since $x^k\to x^*$ and
$\lambda(x^k)\to 0$ as $k\to\infty$, $\lim_{k\to\infty} \nabla^2
B_{\mu_k} (x^k)  =  \nabla^2 f(x^*)$.  Thus, by assumption (b), for
sufficiently large $k$, the Hessian  $\nabla^2 B_{\mu_k}(x^k)$ is
positive definite, proving (iv).

To verify the existence of a unique, differentiable function $x(\mu)$
for positive $\mu$ in a neighborhood of $x(\mu_k)$, we apply the implicit function
theorem (see, for example, \cite[p.~128]{OrtR00} and \cite[pp.~585--586]{NocW99})
to the $n+1$ variables $(x,\mu)$.  At $(x^k,                                               
\mu_k)$, we know that the following system of nonlinear equations has
a solution:
\[
  \Phi(x,\mu) = g(x)
\]
The Jacobian of $\Phi$ with respect to $x$
is the barrier Hessian $\nabla^2 B_{\mu}(x)$, which was just shown to be
positive definite at $x=x^k$ and $\mu=\mu_k$.  The implicit function
theorem then implies that there is a locally unique,
differentiable function
$x(\mu)$ passing through $x(\mu_k)\defined x^k$ such that $\Phi(x(\mu),\mu) = 0$
for all positive $\mu$ in a neighborhood of $\mu_k$.  

Using continuation arguments, it is straightforward to show that the
function $x(\mu)$ exists for all $0<\mu\le \mu_k$ for all sufficiently
large $k$, giving result (vi).

Result (vi) is immediate from the local uniqueness of $x(\mu)$ and
result (i), that $x_k$ is a local unconstrained minimizer of the
barrier function.
\end{proof}

\section{The Primal Barrier Energy Method}
\label{sec:primal}

Inequality constraints on $u$ introduced in Section \ref{sec:einstein}
can be enforced using a barrier method.   Define
$$J_{\mu}(u)=J(u)-\mu\int_\Omega \ln (u) \,\,\, dx,$$
where $J(u)$ is defined as in (\ref{eqn-energy}).  The Gateaux derivative
is given by 
\begin{eqnarray}
J'_\mu(u)(v) & = & J'(u)(v)-\frac{d}{dt}\left[ \mu\int_{\Omega} \ln
  (u+tv)\right]\bigg|_{t=0} \nonumber \\ 
& = & \int_{\Omega}\left[ \left(\bar{a}\nabla u\right)\cdot\nabla v
+k(u)v \right] \,\,\, dx \nonumber\\
& & +\int_{\partial_N\Omega}(c u - g_N) v\,\,\, ds - \mu\int_{\Omega}
u^{-1}v \,\,\, dx.\label{eqn-energyderiv}
\end{eqnarray}
Thus, the condition for stationarity of $J_{\mu}(u)$ is given by solving the
following problem:
\begin{equation}\label{eqn-stat-mu}
\text{Find}~ u\in H_{0}^1{(\Omega)} ~\text{s.t.}~
J'_{\mu}(u)(v)=0, 
\ \  \forall v\in H_{0}^1{(\Omega)}.
\end{equation}

The second Gateaux derivative of $J_{\mu}(u)$ is given by
\begin{equation}
J''_{\mu}(u)(w,v)= \int_{\Omega} \left[ \left( \bar{a}\nabla w\right)\cdot \nabla v + k'(u)wv\right]  \,\,\, dx
+ \int_{\partial_N\Omega}  cwv \,\,\, ds + \mu\int_{\Omega} u^{-2}wv\,\,\, dx.
\label{eqn-secondmu}
\end{equation}
Thus, the Newton update for solving (\ref{eqn-stat-mu}) is given by:
\begin{equation}\label{eqn-newtonConst}
\text{Find}~ w\in  H_{0,D}^1(\Omega) ~\text{s.t.}~
J''_{\mu}(u)(w,v) = - J'_{\mu}(u)(v), 
\ \ \forall v\in H_{0,D}^1(\Omega).
\end{equation}

\subsection{Discretization}  We use a standard Galerkin finite element method to approximate the solution (\ref{eqn-newtonConst})
in an $N$-dimensional subspace $X_h\subset X = H_{0,D}^1(\Omega) \cap [u_-,u_+]$.
Thus, we seek a solution $u_h\in X_h$ such that
\begin{equation}\label{eqn-stat-mu-sub}
J'_{\mu}(u_h)(v_h)=0, 
\ \  \forall v\in X_h.
\end{equation}
The Newton update is given by:
\begin{equation}\label{euler-discretized}
\text{Find}~ w_h\in X_h\subset X ~\text{s.t.}~
J''_{\mu}(u_h)(w_h,v_h) = - J'_{\mu}(u_h)(v_h), 
\ \ \forall v_h\in X_h.
\end{equation}
Let $\{\phi_i\}_1^N$ be a basis $X_h$.  Then, without loss of
generality, let
$$u_h=\sum_{i}^N \alpha_i\phi_i, \quad w_h= \sum_{i}^N \beta_i\phi_i,$$
for some $\{\alpha_i\}$ and $\{\beta_i\}$.  It is sufficient to take the test functions $v_h \in
X_h$ to be the basis functions $\{\phi_i\}_{i=1}^N$.
Equation (\ref{euler-discretized}) is equivalent to solving the following
matrix-vector equation:
\begin{equation}\label{eqn-mat-vec-primal}
\left[A(u_h)+\mu M(u_h)\right]W=-\left[G(u_h)-\mu H(u_h)\right],
\end{equation}
where 
\begin{eqnarray}
A_{ij}(u_h) & =  &J''_\mu(u_h)\left(\phi_j,\phi_i\right), \\
M_{ij}(u_h) & = & \int_\Omega \left(u_h\right)^{-2}\phi_j\phi_i\,\,\, dx\\
W_i & = & \beta_i, \\
G_i(u_h) & = & J'_\mu(u_h)(\phi_i),\\
H_i(u_h) & = & \int_{\Omega} (u_h)^{-1}\phi_i \,\,\, dx
\end{eqnarray}
The barrier term contributes an extra term to the system matrix, namely
$$ M=\mu\int_{\Omega} u_h^{-2}\phi_j\phi_i\,\,\, dx. $$
Thus, the barrier term adds a positive definite matrix to the original
system matrix, and so, may be viewed as a regularization.  In \FETK,
this integral is approximated using a high-accuracy quadrature rule,
using a finite sum with fixed positive weights.  

The Newton update $W$ defines a descent direction for
\begin{equation} \label{eqn-primalMerit}
\phi(u_h)=\frac{1}{2}\| G(u_h) \|_2^2,
\end{equation}
and thus, $\phi(u_h)$ may be used as a merit function to enforce
sufficient descent. 

At optimality, the solution $u_h$ must lie in the strict interior of
the feasible region, and thus, the Lagrange multipliers must be
exactly zero.  Because of this, there is no restriction that $\mu$ must
be kept away from zero.  Thus, $\mu$ may
be steadily decreased, and in fact, may be set to zero--solving the
original stationary problem.

Algorithm~\ref{alg-barrier-primal} summarizes the primal barrier energy method:

\begin{figure}[ht]
\begin{center}
\begin{minipage}{\textwidth}
\begin{Pseudocode}{Algorithm~\ref{alg-barrier-primal}: Primal Barrier Energy Method}\label{alg-barrier-primal}
\tab Choose $u_0>0$, $\mu>0$, $\eta\in \left(0,\frac{1}{2}\right)$, and $\gamma\in (0,1)$;
\tab Set $k=0$; 
\tab \WHILE{} $\NOT \converged$ \DO{}
\tab Compute $u(\mu)$ to approximately solve (\ref{eqn-stat-mu-sub}):
\tabb \WHILE{} $\NOT \converged$ \DO{}
\tabbb Solve (\ref{eqn-mat-vec-primal}) to obtain $w^i_h$;
\tabbb Use the ``99\% rule'' to compute $\alpha_{\max}$;
\tabbb Compute $\alpha\in(0,\alpha_{max}]$ such that $\phi(u^i_h+\alpha w^i_h) \leq \phi(u^i_h)+
\eta\alpha\nabla \phi(u_h^i)^Tw^i_h$;
\tabbb  $u^i_h\leftarrow u^i_h+\alpha w^i_h$; 
\tabb \ENDDO{}
\tabb $u_{k+1}\leftarrow u(\mu)$;
\tabb $\mu\leftarrow \gamma \mu$;
\tabb $k\leftarrow k+1$; 
\tabb $i\leftarrow 0$;
\tab \ENDDO{} 
\end{Pseudocode}
\end{minipage}
\end{center}
\end{figure}

In practice, each barrier function $B_\mu (x)$ does not have to be minimized to
high precision.  Typically, each barrier function is considered
sufficiently minimized when the norm of its gradient is either less
than an fixed absolute tolerance or satisfies a relative tolerance
based on $B_\mu (x_0)$, where $x_0$ denotes the initial
guess for the minimization (see, for example, \cite{DemES82}).

\section{Numerical Results}
\label{sec:num}

The standard Newton method, a standard Newton method with safeguarding, and
the primal barrier energy method was implemented using \FETK{} 
(the Finite Element ToolKit; see~\cite{Hols2001a} 
and \href{http://www.FETK.org}{http://www.FETK.org}).
These methods were used to solve the Einstein
constraint equations on three single-hole domains, centered at the origin,
with given boundary conditions on both the inner and outer boundary.  
Each tetrahedral mesh was generated by the GAMer component of FETK,
which is a high-fidelity surface and volume meshing tool based on
standard simplex triangulation, subdivision, and smoothing algorithms
(cf.~\cite{HoYu07a,HoYu07d}).
Details of the three meshes are given in Table~\ref{table-meshes}.

\begin{table}[ht]
\caption{\label{table-meshes} Meshes}
\begin{center} \footnotesize
\begin{tabular}{lccc}\hline
\hstrt                        & Mesh \#1 &     Mesh \#2         &  Mesh \#3 \\\hline
\strt Inner radius    &   50          &    10           &  1 \\
\strt   Outer radius   &  100         &     100         &  100  \\
\strt   Vertices          &   2089           &     1436         &   2820  \\
\hstrt  Simplicies     &  9726           &     7589       &  15321  \\ \hline
\end{tabular}
\end{center}
\end{table}

At the heart of each nonlinear solver is a linear solver (e.g., sparse
direct solver or the conjugate-gradient (\CG) method).  For more
ill-conditioned systems, the Newton equations may not be solved exactly;
however, it is necessary that any step obtained from the linear solver of
choice must be a descent direction.  A simple backtracking line search is
used to obtain a step that meets the sufficient decrease criteria in
Algorithms~\ref{alg-barrier} and~\ref{alg-barrier-primal}.  Convergence is
obtained when the norm of the nonlinear residual defining the PDE 
is less than a chosen tolerance of $\epsilon\defined$1.0e-07, i.e.,  
\begin{equation}\label{eqn-termination}
\|G(u_k)\|_2 \leq \epsilon . 
\end{equation}

For each example, the energy barrier method initialized $\mu$ and then
decreased $\mu$ whenever the iterate $u^k_h$ satisfied 
\begin{equation} \label{eqn-subproblem}
\|f(u^k_h)\|_2\leq \max\{\epsilon_\mu \|f(u^0_h)\|_2, \epsilon_\mu\}
\end{equation}
where $f(u^k_h)=G(u^k_h)-\mu H(u^k_h)$, $u^0_h$ denotes the initial iterate
after decreasing $\mu$, and $\epsilon_\mu=\max\{\min\{0.1,\mu\}, \epsilon\}$.
This choice of $\epsilon_\mu$ allows each subproblem to be solved to greater
accuracy as $\mu$ is decreased.

Reasonable choices for parameters for the Lichnerov equation with boundary
conditions include those given in Examples 1--2, in
Section~\ref{chap:energy}.  The first two examples are with these choices
of parameters.  
\newline 
\newline 
\textbf{Example 1}.  Set $\bar{a}=1.0$,
$R=1.0$, $\tau=0.1$, $\sigma=0.2$, $\rho=0.1$, $c=1.0$, and $g_N=-1.0$.  We
pick $\Omega$ to have Robin boundary conditions on both the inner and outer
boundary.  The presence of the negative exponents acts as a natural barrier
function, preventing infeasible iterates.  For this reason, to obtain a
positive solution it is sufficient to add a safeguarding procedure such as
the so-called ``99\% rule'' (see Section~\ref{sec:methods}) to the standard
Newton method.

Table~\ref{table-11} gives the results of using the standard Newton method,
Newton's method with safeguarding using the ``99\% rule'', and the primal
energy method to solve the Lichnerovich equation.  For each solver, the
initial guess was a vector of all ones.  We list the number of iterations
(``itns''), residual, and the signs of the entries in the vector of
coefficients for $u_h$ (i.e., $+$ denotes all the entries are strictly
positive, $-$ denotes all the entries are strictly negative, and $+/-$
denotes both positive and negative entries).  Note that there is one linear
solve per iteration, and thus, the number of iterations is also the number
of linear solves required by each method.  All four solvers converged on
all three meshes to strictly positive solutions.

In Table~\ref{table-11}, the energy barrier method is reported with two
different initial values of $\mu$.  First, the energy barrier method was
run with $\mu=0.0$, making it numerically equivalent to Newton's
method with safeguarding.  (For nonconvex problems, we do not expect the
energy barrier method to converge to a strictly positive solution with this
choice of $\mu$).  For illustrative purposes, the results with $\mu=1.0$
are also reported; for this test, when each subproblem was sufficiently
solved (i.e., (\ref{eqn-subproblem}) was satisfied), we reduced $\mu$ by a
factor of $1/10$.  However, for this convex problem, allowing faster
reductions of $\mu$ leads to fewer overall iterations.  In fact, reducing
$\mu$ by a factor of $1/100$ led to 14 iterations on mesh \#1, 16
iterations on mesh \#2, and 16 iterations on mesh \#3.  For more difficult
problems, reducing $\mu$ too quickly will impede convergence.  Even though
faster reductions in $\mu$ would lead to results more similar to the Newton
methods, for the results in Table~\ref{table-11}, we chose to display
results with a reasonable, commonly accepted reduction of $1/10$.  Also, it
is worth noting that the different residual values for the algorithms are
inconsequential in that convergence only requires (\ref{eqn-termination})
to be satisfied.  (In these cases, Newton's method was fortunate in that
quadratic convergence led to a much smaller residual for each final
iterate.)  

In Table~\ref{table-11} we see that a positive solution was obtained by all
methods using an initial guess of all ones.  As previously noted, a barrier
method is not required to obtain a strictly positive solution since the
energy functional has a natural barrier in the form of negative
coefficients of $u$.  Thus, we expect that a safeguarded Newton method will
be sufficient to recover a positive solution.  In this example, Newton's
method with and without safeguarding produced the same iterates.  It is of
interest to note that a strictly negative solution can be recovered on all
three meshes by using the standard Newton method together with the initial
guess of a vector of all negative ones.

\begin{table}[ht]
  \caption{\label{table-11}  \footnotesize Example \#1.}
\begin{center}
\begin{footnotesize}
\begin{tabular}{|l|ccc|ccc|ccc|}\hline
\multicolumn{1}{|c|}{ }&\multicolumn{3}{c|}{\hstrt Mesh \#1}&\multicolumn{3}{c|}{Mesh \#2}&\multicolumn{3}{c|}{Mesh \#3 } \\\hline
\hstrt method &  itns & resid & sign &  itns & resid & sign & 
 itns & resid & sign   \\ \hline
\strt Newton (standard)        &  6& 2.71e-12&  + & 6& 5.15e-12& + & 6& 4.73e-12&  +\\ 
\strt Newton (safeguarded)     &  6& 2.71e-12&  + & 6& 5.15e-12& + & 6& 4.73e-12&  +\\ 
\strt Barrier energy ($\mu_0=0.0$) &  6& 2.71e-12&  + & 6& 5.15e-12& + & 6& 4.73e-12&  +\\ 
\strt Barrier energy ($\mu_0=1.0$)  & 22& 6.15e-08&  + &24& 1.19e-08& + & 24& 1.17e-08& + \\\hline
\end{tabular}
\end{footnotesize}
\end{center}
\end{table}

\noindent
\textbf{Example 2}.  Let $a=2$, $R=-1000$, $\tau=\sqrt{72}$,
$\sigma=\sqrt{48}$, and $\rho=1/\pi$.  For this example, the inner and outer
boundaries of all three meshes have a Robin boundary condition with $c=2$ and $g_N=10$.

On the first mesh, the standard Newton method failed to converge in 50
iterations, denoted by the asterisk in Table~\ref{table-22}; in fact,
at the 100th iteration, the current approximation to $u_h$ contained
both positive and negative entries--the standard Newton method was
unable to maintain a strictly positive solution.  When the initial
guess was set to a vector whose entries were all 10, Newton's method
converged to a positive solution; when the initial guess was set to a
vector whose entries were all -10, the standard Newton method returned
a strictly negative solution; and when the initial guess was set to a
vector whose enries were all -1, the standard Newton method returned a
strictly positive solution.  (Newton's method without safeguarding was
unpredictable--one could recover a positive solution even when
starting with an initial negative guess.) Meanwhile, with an initial
guess of all ones, the safeguarded Newton method converged to a
strictly positive solution.  The barrier energy method converged with
$\mu_0=1$ and subsequent reductions in $\mu$ of $1/10$, as in Example 1.

On the second mesh, all methods converged to the same strictly positive
solution when the initial guess was all ones.  (The barrier energy method
was run with $\mu_0=1$ and subsequent reductions in $\mu$ of $1/10$, as in Example 1.)  A
strictly negative solution can be recovered by using the standard Newton
method together with the initial guess of a vector whose entries are all
-10.

On the third mesh, Newton's method without safeguarding did not
converge within the first 100 iterations and had both positive and
negative components.  A negative solution was obtained by the standard
Newton method by starting with a vector whose entries were all -10.
Meanwhile, Newton's method with safeguarding made insignificant
progress after the seventh iteration, and thus, failed to converge in
100 iterations. The safeguarded method failed because the initial
safeguarded Newton steps took some of the coefficients of $u_h$ very
close to the boundary and the following Newton directions continued to
point in the direction of negative numbers for these components.  In
this event, the ``99\% rule'' allows subsequent steps of only
negligible size along the Newton direction at each iteration.  As a
result, the safeguarded Newton method is unable to make any real
progress each subsequent iteration, and thus, fails to converge.

In this example, we see that even though there is a natural barrier in
the Lichnerovich equation, there are additional benefits that a
barrier function approach can offer: By setting the parameter $\mu$ to
be large enough, we can alter the pure Newton direction, preventing
the initial Newton iterates from getting too close to the boundary.
For example, setting $\mu_0$ to be 10, 20, 30, or 40, the energy
method's iterates took large initial steps to the boundary, preventing
convergence; however, with $\mu_0=50$, the algorithm converged to a
positive solution.  (The values in Table~\ref{table-22} are with
$\mu_0=50$ and subsequent reductions in $\mu$ of $1/10$.)

The energy barrier method was the only method to find a strictly positive
solution to this problem on all three meshes.

\begin{table}[ht]
  \caption{\label{table-22}  \footnotesize Example \#2.}
\begin{center}
\begin{footnotesize}
\begin{tabular}{|l|ccc|ccc|ccc|}\hline
\multicolumn{1}{|c|}{ }&\multicolumn{3}{c|}{\hstrt Mesh \#1}&\multicolumn{3}{c|}{Mesh \#2}&\multicolumn{3}{c|}{Mesh \#3 } \\\hline
\hstrt method &  itns & resid & sign &  itns & resid & sign & 
 itns & resid & sign   \\ \hline
\strt Newton (standard)        &  *& 4.69e+19& +/-&11& 9.80e-08& +/- & *& 4.92e+20&  +/- \\ 
\strt Newton (safeguarded)     &  9& 2.89e-09&  + & 7& 1.37e-08& + & *& 2.01e+07&  +\\ 
\strt Barrier energy           & 16& 4.02e-08&  + &16& 7.67e-08& + &17& 9.47e-09& + \\\hline
\end{tabular}
\end{footnotesize}
\end{center}
\end{table}

\noindent
\textbf{Example 3}.  Consider following Yamabe problem:
$$
-8\Delta u + \rho (r) u^5=0,
$$
where $\rho(r)=1/r^3$, and $r$ is the Euclidean distance. (This choice of
$\rho(r)$ was motivated by equation (41) in~\cite{bmp09}).  For this example, we impose
the Dirichlet condition $u=1$ on both the inner and outer boundaries of all
three meshes.  For this example, the initial guess was taken to be a vector
of ones; given the Dirichlet condition, this is a reasonable starting
point.

Table~\ref{table-33} reports the results of each solver on this problem.
Neither backtracking nor a barrier method was required to solve this problem.

\begin{table}[ht]
  \caption{\label{table-33}  \footnotesize Example \#3.}
\begin{center}
\begin{footnotesize}
\begin{tabular}{|l|ccc|ccc|ccc|}\hline
\multicolumn{1}{|c|}{ }&\multicolumn{3}{c|}{\hstrt Mesh \#1}&\multicolumn{3}{c|}{Mesh \#2}&\multicolumn{3}{c|}{Mesh \#3 } \\\hline
\hstrt method &  itns & resid & sign &  itns & resid & sign & 
 itns & resid & sign   \\ \hline
\strt Newton (standard)        & 1 & 1.91e-08&  + & 2& 1.16e-12& + & 3& 1.18e-12 & +\\ 
\strt Newton (safeguarded)     & 1 & 1.91e-08&  + & 2 &1.16e-12& + & 3 & 1.18e-12& +\\ 
\strt Barrier energy ($\mu$=1.0) & 18& 1.13e-12&  + &22&1.25e-12& + & 23& 1.34e-12& +\\ \hline
\end{tabular}
\end{footnotesize}
\end{center}
\end{table}

\noindent
\textbf{Example 4}.  For this example, we modify the problem in Example 3 to include the extra term in (\ref{eqn-yamabe}):
$$
-8\Delta u - \frac{1}{8} u= - \rho (r) u^5,
$$
where $\rho(r)=1/r^3$, and $r$ is the Euclidean distance.  Also, assume
Dirichlet boundary conditions of $u=1$ on both the inner and outer
boundaries.  For this example, the initial guess was taken to be a vector
of ones.

Table~\ref{table-44} contains the results on all three meshes.  The
standard Newton method converged quickly to a solution with positive and
negative components on each mesh.  Newton's method with safeguarding did
not converge on any mesh.  On the all three meshes, the method took initial
large steps to the boundary and made negligible progress after ten
iterations. The primal barrier energy method with $\mu_0=1$ converged on
the first mesh, but this initial value of $\mu$ on the second mesh was too
small.  With $\mu_0=10$, the primal barrier energy method converged to a strictly positive solution on all three meshes.

\begin{table}[ht]
  \caption{\label{table-44}  \footnotesize Example \#4.}
\begin{center}
\begin{footnotesize}
\begin{tabular}{|l|ccc|ccc|ccc|}\hline
\multicolumn{1}{|c|}{ }&\multicolumn{3}{c|}{\hstrt Mesh \#1}&\multicolumn{3}{c|}{Mesh \#2}&\multicolumn{3}{c|}{Mesh \#3 } \\\hline
\hstrt method &  itns & resid & sign &  itns & resid & sign & 
 itns & resid & sign   \\ \hline
\strt Newton (standard)        & 11 & 2.71e-08&  +/- & 16& 9.65e-09& +/- & 18& 8.97e-11 & +/- \\ 
\strt Newton (safeguarded)     & * & 9.16e+03&  + & * & 1.64e+04& + & * & 1.47e+04& +\\ 
\strt Barrier energy ($\mu$=10.0) & 17& 1.94e-11&  + &18&2.86e-11& + & 18& 2.69e-11& +\\ \hline
\end{tabular}
\end{footnotesize}
\end{center}
\end{table}

\section{Conclusion}
\label{sec:conc}

In this article we considered both the design and the analysis of a
certain class of nonconvex optimization-based numerical methods for
approximating positive solutions to nonlinear geometric elliptic
partial differential equations containing critical exponents.  As
noted, these types of problems arise regularly in geometric analysis
and mathematical physics; our primary interest here was Yamabe problem
and the Einstein constraint equations.  The difficulty one faces with
these problems are the simultaneous presence of several challenging
features, including spatial dimension $n \geqs 3$, varying and
potentially non-smooth coefficients, critical (or even super-critical)
nonlinearity, non-monotone nonlinearity (arising from a non-convex
energy), and spatial domains that are typically Riemannian manifolds
rather than simply open sets in $R^n$.  For these types of problems,
there may be multiple solutions, although only positive solutions
typically have mathematical and physical meaning.  This creates
additional complexities in both the theory and numerical treatment of
such problems, as this feature introduces both non-uniqueness as well
as the need to incorporate an inequality constraint into the
formulation.  As a practical approach for treating these difficulties,
we considered numerical methods based on Galerkin-type discretization,
covering any standard bases construction (finite element, spectral, or
wavelet), and the combination of a barrier method for nonconvex
optimization and global inexact Newton-type methods for dealing with
nonconvexity and the presence of inequality constraints.  After giving
an overview of barrier methods in non-convex optimization, we then
developed and analyzed a primal barrier energy method.  We then
presented a sequence of numerical experiments using this type of
barrier method, based on a particular Galerkin method, namely the
piecewise linear finite element method, leverage the FETK modeling
package.  In the experiments, the negative pole in the Hamilitonian
constraint provided a ``natural'' barrier, aiding the convergence of
Newton methods.  In this setting, a barrier method will often be
unnecessary; however, in some cases the numerical experiments showed
that allowing for a flexible barrier parameter can be helpful (see
Example 2, Mesh \#3).  The experiments also confirmed that on some
classes of Yamabe problems, a solution could not be found without the
use of a barrier method (see Example \#4), suggesting that barrier
methods are more useful on critical exponent problems without
singularities that arise from lower-order negative exponent terms.

Although we considered here only scalar elliptic equations with variational
structure, that is, they that arise as the Euler condition for stationarity 
of an underlying (usually nonconvex) energy, more generally these types of 
critical exponent problems may arise as part of a more complex elliptic system.
A prime example is the coupled Hamiltonian and momentum constraints in
the Einstein equations~\cite{HNT07a,HNT07b}.
While the Hamiltonian constraint (as well as the momentum constraint)
alone has variational structure, when combined as a system there is in
fact no variational structure to exploit.
Nevertheless, the ideas in this paper can be applied by using
alternative formulations of the Einstein constraints, and will
be pursued in a second article.

\section*{Acknowledgments}
   \label{sec:ack}

MH was supported in part by NSF Awards~0715146 and 0915220,
and by DOD/DTRA Award HDTRA-09-1-0036.
JE was supported in part by a Ralph E. Powe Junior Faculty Enhancement Award.

\bibliographystyle{abbrv}
\bibliography{mjh,jbe}


\end{document}